\theoremstyle{definition}% block title, and times new roman
\newtheorem{rem}{Remark}% Remark\newtheorem*{rem}{Remark}* means no number
\newcommand{\figref}[1]{Fig.~\ref{#1}}% figure
\newcounter{mytempeqncnt} % two colum equation
\begin{document}

\title{Secure Transmission and Self-Energy Recycling for Wireless-Powered Relay Systems with Partial Eavesdropper Channel State Information
}
\author{\IEEEauthorblockN{  Jingping~Qiao,~\IEEEmembership{Student~Member,~IEEE,}
                           Haixia~Zhang,~\IEEEmembership{Senior~Member,~IEEE,}
                            Feng Zhao, 
                            and Dongfeng~Yuan,~\IEEEmembership{Senior~Member,~IEEE}
                           }
					
	\thanks{J.~Qiao  (e-mail: {\tt qiaojingping@mail.sdu.edu.cn}), H.~Zhang (e-mail: {\tt haixia.zhang@sdu.edu.cn}), and D.~Yuan  (e-mail: {\tt dfyuan@sdu.edu.cn}) are all with Shandong provincial key laboratory of wireless communication technologies, Shandong University, China. 
	F.~Zhao (e-mail:{\tt zhaofeng@guet.edu.cn}) is with the key laboratory of cognitive radio and information processing,  Guilin University of Electronic Technology, China.
	}
	%\thanks{The work presented in this paper was supported in part by  the National Science Fund  of China for Excellent Young Scholars (No. 61622111), and the National Natural Science Foundation of China (No. 61371109, No. 61671278, and No. 61401254).}
}

% make the title area
\maketitle
\pagestyle{empty}  % no page number for the second and the later pages%不设置页眉页脚                  
\thispagestyle{empty} % no page number for the first page 
%\newpage
\begin{abstract}
This paper focuses on the secure transmission of wireless-powered relay systems with imperfect eavesdropper channel state information (ECSI).  
For efficient energy transfer and information relaying, a novel two-phase protocol is proposed, 
in which the relay operates in full-duplex (FD) mode to achieve simultaneous wireless power and information transmission. 
Compared with those existing protocols, the proposed design possesses two main advantages: 
1) it fully exploits the available hardware resource (antenna element) of relay and can offer higher secrecy rate; 
2) it enables self-energy recycling (S-ER) at relay, in which the loopback interference (LI) generated by FD operation is harvested and reused for information relaying. 
To maximize the worst-case secrecy rate (WCSR) through jointly designing the source and relay beamformers coupled with the power allocation ratio,
an optimization problem is formulated. 
This formulated problem is proved to be non-convex and the challenge to solve it is how to concurrently solve out the beamformers and the power allocation ratio.  
To cope with this difficulty, an alternative approach is proposed by converting the original  
problem into three subproblems. By solving these subproblems iteratively, the closed form 
solutions of robust beamformers and power allocation ratio for the original problem are achieved. 
Simulations are done and results reveal that the proposed S-ER based secure transmission scheme outperforms the traditional time-switching based relaying (TSR) scheme at a maximum WCSR gain of $80\%$.
Results also demonstrate that the WCSR performance of the scheme reusing idle antennas for information reception is much better than that of schemes exploiting only one receive antenna.  
\end{abstract}

\begin{IEEEkeywords}
Physical layer security, full-duplex relay, energy harvesting, self-energy recycling, imperfect channel state information.
\end{IEEEkeywords}

%********************  section I ************************
\section{Introduction}
% 总体思路，协作安全的重要性和优势。然后中继能量受限，无线供能中继工作分析。由于HD结构带来频谱损失，开展FD工作。针对FD的自干扰问题，文献提出了新的S-ER方案。重点现有S-ER方案存在的问题，并针对该问题开展本文的工作。

\IEEEPARstart{T}{he} broadcast nature of wireless medium presents
a formidable challenge in ensuring simultaneously secure and
reliable communications in the presence of adversaries \cite{Liu2017survey}. 
%To this end, physical layer security \cite{Wyner1975}, a novel approach at the physical layer that achieves secrecy by exploiting the randomness inherent to wireless channels, has emerged recently and drawn considerable attention \cite{Chen2012TIFS,Ju2017TCOM,Wang2017TCOM}. 
To enhance the transmission security, physical layer security, that achieves secrecy by exploiting the inherent randomness of wireless channels, has been proposed recently and drawn intensive attention \cite{Wyner1975,Chen2012TIFS,Ju2017TCOM,Wang2017TCOM}. 
A particularly interesting physical layer security technology is cooperative relaying, since it can not only combat the effect of the path loss by shortening the access distance, 
but also provide extra spatial degrees of freedom for information transmission by  coordinating with other communication nodes.
Recent studies \cite{LunDong2010,Jing2013APCC,Jiang2017SPL,
Chen2015TIFS,Jing2015,Lee2015CL,Rod2015Magaz} show that
compared with direct transmission schemes, cooperative relaying can bring more than twice the secrecy rate gain. 
This has made physical layer security one of the potential technologies to protect the security of 5G wireless networks \cite{5Gpaper}.

% 中继能量受限，需要进行能量收割
%Nevertheless, the performance improvements of cooperative systems require relays to spend their own power for information retransmission. 
The performance of cooperative systems is based on the amount of power of relays spent on information transmission.  
Sometimes, making relay spend more power on information relay is unrealistic. 
For instance, although the mobile users can be treated as cooperative relays in cellular networks, they may unwilling to participate in cooperative communications because of their starving energy storage. 
To alleviate this, wireless-powered strategy is proposed by making use of radio-frequency (RF) signals emitted from other communication nodes to provide stable energy supply for relay nodes \cite{RuiZhang2013TWC,Zhou2013TC,cai2014}. 
With respect to different roles of relays in secure systems, there are two types of cooperative schemes, i.e.,  
wireless-powered cooperative relaying scheme and wireless-powered cooperative jamming scheme. 
Specifically, with harvested energy from RF signals emitted by the source node or the destination node, cooperative relaying schemes \cite{Hong2015ICC,Chen2016TVT,Wang2016TSP,Kalamkar2017TVT} utilize multi-antenna technologies to steer information beams for information enhancement at desired receivers and information leakage degradation at malicious attackers.
While for wireless-powered cooperative jamming schemes \cite{Liu2016TWC,Liu2017CL,Moon2016GLOBECOM}, idle internal communication nodes act as jammers, which consume the harvested energy to transmit jamming signals confusing malicious eavesdroppers. 
All of the aforementioned work focuses on half-duplex (HD) relaying operation, whereas the spectral efficiency loss %inherent loss 
from HD constraint leads to system secrecy rate degradation.

% 针对HD的性能损失，FD结构研究以及S-ER的提出
As an attractive solution to overcome this limitation of wireless-powered secure systems, wireless-powered full-duplex (FD) architecture has drawn increasing attention in secure systems \cite{Bi2016JSTSP,Jing2017TVT,Mobini2017ICC}. 
In \cite{Bi2016JSTSP}, a cooperative jamming scheme was developed for friendly jammers with FD architecture, in which jammer nodes simultaneously harvest energy and transmit jamming signals to degrade eavesdropper channels. 
Moreover, 
a novel time switching (TS)-based FD relaying scheme is proposed to enable the secure communications in wireless-powered cooperative systems \cite{Jing2017TVT,Mobini2017ICC}. 
Relay nodes adopt available energy harvested in the first phase to assist information transmission in a FD way. 
Compared with HD operation, FD capable nodes can offer an average of $33\%-66\%$ secrecy rate gain. However, the substantial secrecy performance of FD relaying is limited
by the introduced loopback interference (LI).
%************************
Although many advanced interference-cancellation (IC) technologies \cite{FDradios, Sa2014JSAC} have been proposed to suppress LI, implementing such technologies needs 
 additional energy consumption and expensive hardwares. 
%To against the negative effects of LI, a novel self-energy recycling (S-ER) based relaying scheme was proposed \cite{Zeng2015WCL} and investigated in \cite{Kim2016CCNC,Shafie2016WCL,Wu2017WCL}. 
A novel self-energy recycling (S-ER) based relaying scheme \cite{Zeng2015WCL} is proposed and investigated to against the negative effects of LI \cite{Kim2016CCNC,Shafie2016WCL,Wu2017WCL}.  
%*****************************
Particularly, 
secure beamforming \cite{Kim2016CCNC} and antenna assignment \cite{Shafie2016WCL} schemes are exploited for S-ER based relaying systems, in which the LI energy is recycled and reused for information retransmission. 
Additionally, as an extension of \cite{Kim2016CCNC,Shafie2016WCL},
\cite{Wu2017WCL} introduced S-ER strategy into FD source nodes, and proposed an optimal beamforming scheme for  
efficient energy and confidential information transmission. 

% 现有S-ER研究存在问题
The above work on S-ER based secure systems can make use of the energy of LI efficiently to achieve better secrecy rate performance. 
However, \cite{Kim2016CCNC,Shafie2016WCL,Wu2017WCL} either focus on single antenna transmission from source to relay or secure systems without cooperative relays, 
leading to limited secrecy enhancement of S-ER relaying schemes. 
More importantly, the potential performance improvements of above work rely on the perfect eavesdropper channel state information (ECSI), which is an ideal assumption \cite{Huang2012TSP}. 
In general, it is prohibitively difficult for a transmitter to obtain full ECSI, since  eavesdroppers are usually keeping silent to hide.  
Moreover, even eavesdroppers feedback channel estimation results to the transmitter, due to some practical issues such as channel estimation errors, feedback errors, quantization and delays, the feedback information may be untrustworthy and exists some uncertainties \cite{Wang2009TSP,Chen2017survey}. Such kind of uncertainties in turn lead to serious information leakage and secrecy performance degradation.

%本文工作
Motivated by the aforementioned problems, this work focuses on 
robust secure transmission scheme design for S-ER based wireless-powered relay systems. 
To facilitate S-ER operation at relay, a two-phase FD relaying protocol is adopted, and separate antenna configuration is considered at the relay node. 
In particular,
half of the time block is devoted for information transmission from source to relay, while the remaining time is used for concurrent information forwarding and energy harvesting. 
The harvested energy from the LI channel in addition to RF signals emitted from the source can be stored and then used for information retransmission in the next time block. 
The main objective of this work is to investigate the effect of imperfect ECSI on the system secrecy rate, and design robust beamformers as well as power allocation scheme 
to degrade the information leakage caused by channel uncertainty, and finally offer more secrecy rate gain. 
Specifically, the main contributions of this paper can be summarized as follows.
\begin{itemize}
\item 
A novel two-phase S-ER based secure system is established by adopting FD mode  
to achieve simultaneous information and power transmission. 
An antenna reuse approach is proposed to make use of 
idle transmit antennas for information reception to enhance the information signal strength. 

\item 
The worst-case secrecy rate (WCSR) is introduced as the system secrecy performance metric.
To maximize WCSR, a joint optimization problem including information beamformer, transceiver beamformers and power allocation ratio is formulated and investigated under imperfect ECSI.

\item 
With respect to the information beamformer, the transceiver beamformers and the power allocation ratio, 
the formulated optimization problem of WCSR is non-convex, and it is prohibitively difficult to solve due to high computational complexity. 
To cope with this difficulty and solve it efficiently, an alternative method is proposed by converting the original problem into three subproblems. 
Solving these subproblems iteratively, the closed form solutions of the robust beamformers and the optimal power allocation ratio maximizing WCSR are derived.

\item 
To demonstrate the superiority of the proposed S-ER based secure system in terms of secrecy performance, the WCSR optimization problem for the traditional time switching based relaying (TSR) systems is also formulated and included as a baseline. 
Simulations are done and results show that the proposed S-ER based secure systems outperform those TSR systems greatly. 
\end{itemize}

The rest of this paper is organized as follows. 
Section II describes the system model of the proposed S-ER based secure system and the WCSR optimization problem formulation. 
Section III elaborates the proposed algorithm to find solutions for the robust beamformers and the power allocation ratio to maximize WCSR performance.  
The WCSR optimization problem for traditional TSR systems is introduced in Section IV as a benchmark for performance comparisons. 
Then, Section V simulates and analyzes the performance of the proposed S-ER based secure system. 
Finally, Section VI concludes all the paper.

\emph{Notations}:
The bold upper (lower) letters denote matrices (column vectors), and $(\cdot)^{\dag}$ denotes the conjugate transpose. 
$\mathbf{I}_{M}$ is the identity matrix of size $M\times M$. 
$\mathbb{E}\{\cdot\}$ denotes expectation, and $\|\cdot\|$ indicates the $L_2$ norm of a vector and the $|\cdot|$ means the absolute value.
$[a]^{+}$ denotes $\max(a,0)$. 
And $\mathcal{CN}(0,\sigma^2)$ represents a circularly symmetric complex Gaussian distribution with zero mean and variance
$\sigma^2$. 
$\mathbb{C}^{N\times M}$ represents the $N \times M$ complex vector space.

%********************  section II ************************
\section{System Model and Problem Formulation}

\subsection{System Model}
% ************************ Fig 1  ************************
\begin{figure}[h]
\centering
\includegraphics[width=0.45\textwidth]{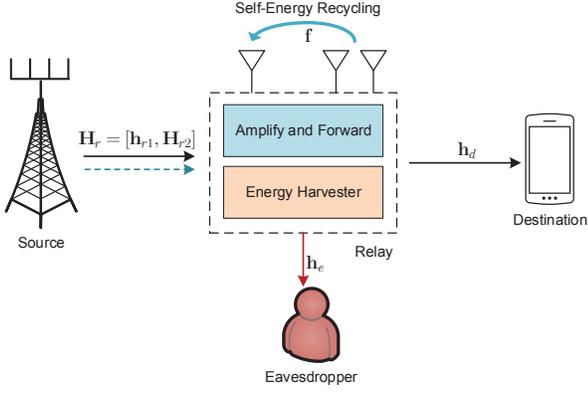}
\caption{System model of the wireless-powered FD relay system, where the S-ER strategy is employed at the FD relay.}
\label{fig_1}
\end{figure}

%We consider 
Secure transmission in a wireless-powered relay system is considered, as shown in \figref{fig_1}. 
In this system, a multi-antenna source transmits confidential signals to a single antenna destination with the help of a 
wireless-powered relay, while keeping information from being intercepted by an %extra 
eavesdropper\footnote{
Note that it is assumed that the direct link between source and destination (or eavesdropper) does not exist due to the deep shadowing effect in the transmission scenario \cite{Hong2015ICC}.
}.
To facilitate simultaneous energy and secure information transmission, a two-phase FD relaying protocol is adopted.   
As shown in \figref{fig_2a}, the source node transmits information signals to the relay node in the first phase, while in the second phase, the relay node works in FD mode to forward information and harvest energy concurrently. 
To enable FD in the second phase, a separate antenna configuration is adopted at relay, i.e., the relay is equipped with one receive antenna and $M$ transmit antennas. 
In the first phase, generally only one receive antenna is used for information reception, while those multiple transmit antennas are idle. %out of service. 
To make full use of all the available hardware resources (i.e., idle transmit antennas) of relay node to capture more information, an antenna reuse approach at the relay node is proposed. 
The detailed architecture is given in \figref{fig_2b}, where $M$ idle transmit antennas are reused for information reception in the first phase.

% ************************ Fig 2  ************************
\begin{figure}[!t]
\centering
\subfigure[Wireless information and power transmission during one time block $T$.]{
\begin{minipage}{0.45\textwidth}
\centering
\includegraphics[width=0.7\textwidth]{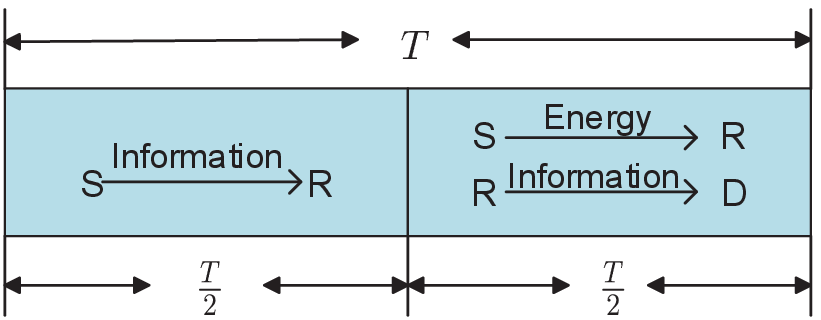}\\
\label{fig_2a}
\end{minipage}
}
\subfigure[Time switching architecture for antenna reuse approach at the relay.]{
\begin{minipage}{0.45\textwidth}
\centering
\includegraphics[width=0.8\textwidth]{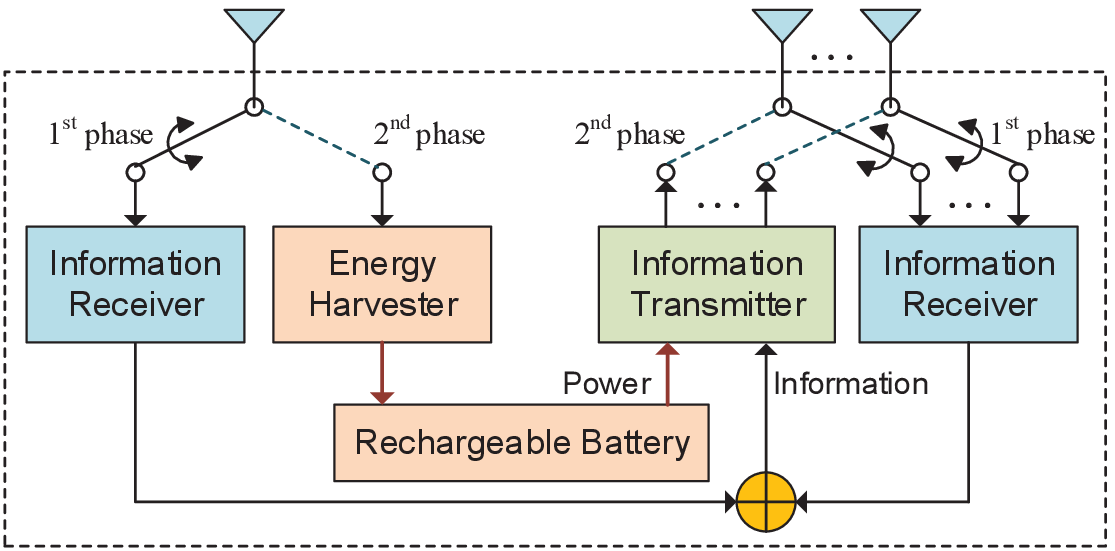}
\label{fig_2b}
\end{minipage}
}
\caption{SWIPT model and 
time switching architecture for antenna reuse approach at the wireless-powered relay.}
\label{fig_2}
\end{figure}

With the antenna reuse approach, the received signal at the relay node during the first phase is written as 
\begin{equation}\label{eq_1}
\mathbf{y}_{r,1}=\mathbf{H}_{r}^{\dag}\mathbf{w}_{s}x_{s}+\mathbf{n}_{r},
\end{equation}
where $\mathbf{y}_{r,1} \in \mathbb{C}^{(M+1)\times 1}$,  $\mathbf{w}_{s}x_s$ denotes confidential information signal transmitted from source node with 
$\mathbb{E}\{|x_{s}|^2\}=P_s$, and $\mathbf{w}_s$ represents the information beamforming vector at the source node with unit power
 $\mathbb{E}\{\|\mathbf{w}_s\|^2\}=1$. 
The channel from source to relay $\mathbf{H}_{r}=[\mathbf{h}_{r1}, \mathbf{H}_{r2}]$ is assumed to be perfectly known to legitimate nodes, 
where $\mathbf{h}_{r1} \in \mathbb{C}^{N\times 1}$ and $\mathbf{H}_{r2} \in \mathbb{C}^{N\times M}$ represent channel coefficients from source to receive antenna  and those transmit antennas at relay, respectively.
Moreover, $\mathbf{n}_{r}$ denotes the additive white Gaussian noise (AWGN) at relay, i.e., $\mathbf{n}_{r}\sim \mathcal{CN}(0,N_{0}\mathbf{I}_{M+1})$.

In the second phase, to enable simultaneous wireless information and power transfer, the relay node works in FD mode. 
Different from traditional FD relaying protocols, the LI caused by FD operation is beneficial, 
since it is also a power source for the energy-constrained relay and 
enables the relay to reuse its own transmitted energy. 
This mode is the previously described S-ER \cite{Zeng2015WCL,Shafie2016WCL}. 
Therefore, adopting S-ER protocol at relay, no extra energy and hardware resources needed any more to suppress the LI. 
Moreover, the energy of LI can be harvested together with the energy signal emitted from the source, i.e., 
\begin{equation}\label{eq_2}
y_{r,2}=\mathbf{h}_{r1}^{\dag}\mathbf{w}_{H}x_{p}+
\mathbf{f}^{\dag}\mathbf{w}_{t}x_{r}+n_{r,H}.
\end{equation}
On the right-hand side of equation (\ref{eq_2}), the first part is the energy signal from the source node, where $\mathbf{w}_{H}$ denotes the energy beamformer with $\mathbb{E}\{\|\mathbf{w}_H\|^2\}=1$, and $x_{p}$ is the energy signal emitted from the source with power $\mathbb{E}\{|x_{p}|^2\}=P_s$.  
Additionally, the second part represents the LI signal, wherein $\mathbf{f}$ denotes the LI channel between antennas at FD relay and is normally regarded as Rayleigh fading channel \cite{Chen2015TIFS,Ikri2012TWC}.
While $\mathbf{w}_{t}x_{r}$ represents the retransmit signal with $\mathbb{E}\{|x_r|^2\}=P_r$, and $\mathbf{w}_{t} \in \mathbb{C}^{M\times 1}$ is the transmit beamforming vector at the relay with $\mathbb{E}\{\|\mathbf{w}_t\|^2\}=1$. 
Besides, as the energy transmission between source and relay can be seen as a MISO (Multi-Input Single-Output) system, MRT (Maximum Ratio Transmission) is employed as the energy beamformer, i.e., $\mathbf{w}_{H}=\frac{\mathbf{h}_{r1}}{\|\mathbf{h}_{r1}\|}$. 
Then, the harvested energy is written as
\begin{equation}\label{eq_3}
E_{H}=\frac{T}{2}\eta (|\mathbf{h}_{r1}^{\dag}\mathbf{w}_{H}|^2P_{s}+|\mathbf{f}^{\dag}\mathbf{w}_{t}|^2P_r).
\end{equation}
where $\eta$ is the energy harvesting efficiency, $0 < \eta\leq 1$, which depends on the rectification process and the energy harvesting circuitry \cite{Zhou2013TC}.
Here, the power of noise (including both the antenna noise and the rectifier noise) is not taken into consideration, since it  is much smaller than that of energy signal \cite{Mou2015WCSP}.

The harvested energy at relay will be stored in a battery for information relaying in the next transmission block.
Considering the long-term time variations, the needed transmit power at relay is not higher than the harvested power, i.e., $P_r\leq 
\eta (|\mathbf{h}_{r1}^{\dag}\mathbf{w}_{H}|^2P_{s}+|\mathbf{f}^{\dag}\mathbf{w}_{t}|^2P_r)$. 
$\delta$ is defined as the power allocation ratio, $0 \leq \delta \leq 1$, which denotes $\delta$ part of the harvested energy is allocated for information transmission. Then the analytical expression of $P_r$ can be expressed as
\begin{equation}\label{eq_4}
P_{r}=\frac{\delta\eta P_{s}|\mathbf{h}_{r1}^{\dag}\mathbf{w}_{H}|^2}{1-\delta\eta|\mathbf{f}^{\dag}\mathbf{w}_{t}|^2}. 
\end{equation}

%Using amplify-and-forward (AF) protocol, the relay first combines received signal $\mathbf{y}_{r1}$ in the first phase with beamforming vector $\mathbf{w}_{r}\in \mathbb{C}^{(M+1)\times 1}$, then forwards it to the destination with amplify factor $\beta$ and beamforming vector $\mathbf{w}_{t}\in \mathbb{C}^{M\times1}$.
Assume that a receive beamformer vector, $\mathbf{w}_{r}\in \mathbb{C}^{(M+1)\times 1}$, is employed at relay, the transmit signal will be $\mathbf{w}_{r}^{\dag}\mathbf{y}_{r,1}$.  
Employing amplify-and-forward (AF) protocol, the transmit signal is first amplified by the amplify factor $\beta$ and processed by the transmit beamformer $\mathbf{w}_{t}\in \mathbb{C}^{M\times1}$, and then forwarded to the destination.  
Therefore, the processed transmit signal in this phase can be denoted as $\mathbf{w}_{t}x_r$, where 
 $x_r= \beta \mathbf{w}_{r}^{\dag}\mathbf{y}_{r,1}$, and the amplify factor is expressed as $\beta= \sqrt{P_r/(P_s|\mathbf{w}_{r}^{\dag}\mathbf{H}_{r}^{\dag}\mathbf{w}_{s}|^2+N_0)}$.

Finally, the received signal at the destination can be written as
\begin{eqnarray}\label{eq_5}
\begin{split}
y_{d}=&\mathbf{h}_{d}^{\dag}\mathbf{w}_{t}x_{r}+n_{d}\\
=&\mathbf{h}_{d}^{\dag}\mathbf{w}_{t}\left(\beta\mathbf{w}_{r}^{\dag}\mathbf{H}_{r}^{\dag}\mathbf{w}_{s}x_{s}
+\beta\mathbf{w}_{r}^{\dag}\mathbf{n}_{r}\right)+n_d,\\
\end{split}
\end{eqnarray}
where $\mathbf{h}_d$ is the relay-to-destination channel vector, which is assumed to be perfectly known to legitimate nodes, and $n_d \sim \mathcal{CN}(0,N_{0})$ denotes the noise at destination. Meanwhile, the received signal at the eavesdropper is expressed as 
\begin{eqnarray}\label{eq_6}
\begin{split}
y_{e}%=&\mathbf{h}_{e}^{\dag}\mathbf{w}_{t}x_{r}+n_{e}\\
=&\mathbf{h}_{e}^{\dag}\mathbf{w}_{t}\left(\beta\mathbf{w}_{r}^{\dag}\mathbf{H}_{r}^{\dag}\mathbf{w}_{s}x_{s}
+\beta\mathbf{w}_{r}^{\dag}\mathbf{n}_{r}\right)+n_e,\\
\end{split}
\end{eqnarray}
where $n_e \sim \mathcal{CN}(0,N_{0})$ represents the noise at the eavesdropper, and $\mathbf{h}_e$ denotes the relay-to-eavesdropper channel coefficient, which is perfectly known to the eavesdropper but imperfectly known to those legitimate nodes \cite{Khi2010TIT}. 
%Without loss of generality, the uncertainty region of $\mathbf{h}_e$ is modeled as a sphere with center $\bar{\mathbf{h}}_e$, which is the estimated channel vector of eavesdropper, and radius $\epsilon$, where $\epsilon>0$ is the channel mismatch. That is, $\{\mathbf{h}_e | \mathbf{h}_e=\bar{\mathbf{h}}_e+\Delta\mathbf{h}_e, \|\Delta \mathbf{h}_e\|\leq \epsilon\}$.
%改变了信道模型和参量的定义描述
Without loss of generality, the imperfect ECSI is modeled as additive uncertainty model $\mathbf{h}_e=\bar{\mathbf{h}}_e+\Delta\mathbf{h}_e$, where $\bar{\mathbf{h}}_e$ represents the estimated channel vector of the eavesdropper, $\Delta\mathbf{h}_e$ represents the channel uncertainty. 
It is assumed that the uncertainty is deterministically bounded \cite{Wang2013TVT}, i.e., $\|\Delta \mathbf{h}_e\|\leq \epsilon$.

%************** subsection II-B *********************** 
\subsection{Problem Formulation}
%说明为什么要用worst-case作为评价标准
In order to guarantee secrecy for any admissible ECSI uncertainties, including worst, the worst-case secrecy rate (WCSR) \cite{Wang2013TVT,Li2012TSP,Huang2012TSP} is adopted as the performance metric, which is defined as the minimum secrecy rate for any error in the uncertainty region of the channel, 
\begin{eqnarray}\label{eq_7}
\begin{split}
R_{wc}=&\min_{\|\Delta \mathbf{h}_e\| \leq \epsilon} \frac{1}{2}[\log_{2}(1+\gamma_d)-\log_2(1+\gamma_{e})]^{+}\\
%&\min_{\|\Delta \mathbf{h}_e\| \leq \epsilon} [R_d-R_e]^{+}\\
=&\frac{1}{2}[\log_{2}(1+\gamma_d)-\log_2(1+\gamma_{ewc})]^{+},
\end{split}
\end{eqnarray}
where $\gamma_d$ represents the received SINR at the destination and is expressed as
\begin{eqnarray}\label{eq_8}
\begin{split}
\gamma_{d}=& \frac{P_s\beta^2|\mathbf{h}_{d}^{\dag}\mathbf{w}_{t}|^2|\mathbf{w}_{r}^{\dag}\mathbf{H}_{r}^{\dag}\mathbf{w}_{s}|^2}{\beta^2 |\mathbf{h}_{d}^{\dag}\mathbf{w}_{t}|^2\|\mathbf{w}_{r}^{\dag}\|^2N_0+N_0}\\
=&\frac{P_s P_r|\mathbf{h}_{d}^{\dag}\mathbf{w}_{t}|^2|\mathbf{w}_{r}^{\dag}\mathbf{H}_{r}^{\dag}\mathbf{w}_{s}|^2}{P_rN_0|\mathbf{h}_{d}^{\dag}\mathbf{w}_{t}|^2+N_0P_s|\mathbf{w}_{r}^{\dag}\mathbf{H}_{r}^{\dag}\mathbf{w}_{s}|^2+N_0^2},\\
\end{split}
\end{eqnarray}
%给出worst-case SINR 的定义
while $\gamma_{ewc}$ denotes the maximum SINR the eavesdropper can get under any channel realization. 
To obtain the expression of $\gamma_{ewc}$, the following inequalities associated with imperfect wiretap channel are first given.

According to triangle inequality and the bound of the channel uncertainties, the following inequality holds
\begin{equation}\label{eq_9}
|(\bar{\mathbf{h}}_e+\Delta \mathbf{h}_e)^{\dag}\mathbf{w}_{t}|
\leq |\bar{\mathbf{h}}_e^{\dag}\mathbf{w}_{t}|+|\Delta \mathbf{h}_e^{\dag}\mathbf{w}_{t}|. 
\end{equation}
Besides, applying the Cauchy-Schwarz inequality to the second term in the right-hand-side of (\ref{eq_9}), we have 
\begin{equation}\label{eq_10}
|\Delta \mathbf{h}_e^{\dag}\mathbf{w}_{t}|\leq \|\Delta \mathbf{h}_e\| \cdot \|\mathbf{w}_{t}\| \leq \epsilon \|\mathbf{w}_{t}\| . 
\end{equation}
Plugging (\ref{eq_9}) and (\ref{eq_10}), the following inequality is achieved
\begin{equation}\label{eq_11}
|(\bar{\mathbf{h}}_e+\Delta \mathbf{h}_e)^{\dag}\mathbf{w}_{t}|
\leq |\bar{\mathbf{h}}_e^{\dag}\mathbf{w}_{t}|+\epsilon \|\mathbf{w}_{t}\|.
\end{equation}
Then, $\gamma_{ewc}$ can be expressed as
\begin{eqnarray}\label{eq_12}
\begin{small}
\begin{split}
\gamma_{ewc}=&\max_{\|\Delta \mathbf{h}_e\| \leq \epsilon}\frac{P_s P_r|\mathbf{h}_{e}^{\dag}\mathbf{w}_{t}|^2|\mathbf{w}_{r}^{\dag}\mathbf{H}_{r}^{\dag}\mathbf{w}_{s}|^2}{P_rN_0|\mathbf{h}_{e}^{\dag}\mathbf{w}_{t}|^2+N_0P_s|\mathbf{w}_{r}^{\dag}\mathbf{H}_{r}^{\dag}\mathbf{w}_{s}|^2\!+\!N_0^2}\\
=&\frac{P_s P_r(|\bar{\mathbf{h}}_{e}^{\dag}\mathbf{w}_{t}|+\epsilon\|\mathbf{w}_{t}\|)^2|\mathbf{w}_{r}^{\dag}\mathbf{H}_{r}^{\dag}\mathbf{w}_{s}|^2}
{P_r N_0(|\bar{\mathbf{h}}_{e}^{\dag}\mathbf{w}_{t}|+\epsilon\|\mathbf{w}_{t}\|)^2\!+\!N_0P_s|\mathbf{w}_{r}^{\dag}\mathbf{H}_{r}^{\dag}\mathbf{w}_{s}|^2\!+\!N_0^2}\\
\end{split}
\end{small}
\end{eqnarray}

The detailed analytical expression of WCSR can be obtained by substituting (\ref{eq_8}) and (\ref{eq_12}) into (\ref{eq_7}). 

Finally, with the objective to maximize WCSR performance through jointly designing the source and relay beamformers as well as the relay power allocation ratio, 
the optimization problem is formulated as
\begin{equation*}
(\mathrm{P}1):~\max_{\delta,\mathbf{w}_{s},\mathbf{w}_{r},\mathbf{w}_{t}} R_{wc}(\delta, \mathbf{w}_{s},\mathbf{w}_{r},\mathbf{w}_{t})
\end{equation*}
\begin{equation}\label{eq_16}
s.t. \left\{\begin{array}{ll} 
&\|\mathbf{w}_{s}\|^2=1\\
&\|\mathbf{w}_{r}\|^2=1\\
& \|\mathbf{w}_{t}\|^2=1\\
&P_{r}=\frac{\delta\eta P_{s}|\mathbf{h}_{r1}^{\dag}\mathbf{w}_{H}|^2}{1-\delta\eta|\mathbf{f}^{\dag}\mathbf{w}_{t}|^2}\\
& 0\leq \delta \leq 1\\
\end{array}\right..
\end{equation}

Since the beamforming vectors $\mathbf{w}_s$, $\mathbf{w}_r$ and $\mathbf{w}_t$ as well as the relay power allocation ratio $\delta$ are coupled with each other, it is obvious that the formulated WCSR optimization problem is non-convex. 
Hence, the complexity of finding its optimal solution is prohibitively high. It is even impossible to solve out these four variables concurrently. As an alternative, to solve it, the original problem $\mathrm{P}1$ is converted into three subproblems, which can be solved iteratively. 

%******************* section III  ***********************
\section{Worst-Case Secrecy Rate Optimization}
As mentioned, the formulated optimization problem $\mathrm{P}1$ is non-convex and difficult to solve. To find out the solutions, it is proposed to convert the optimization problem into three subproblems with respect to $\delta$, $(\mathbf{w}_{s},\mathbf{w}_{r})$ and $\mathbf{w}_{t}$, respectively, based on the fact that  
both information beamformer and receive beamformer are coupled with source-to-relay channel coefficients. 
By solving the subproblem of power allocation ratio, the analytical expression of  $\delta^{\star} (\mathbf{w}_{s},\mathbf{w}_{r},\mathbf{w}_{t})$ can be obtained. Substituting it into WCSR in (\ref{eq_7}), the subproblem of $(\mathbf{w}_{s},\mathbf{w}_{r})$ can also be solved. 
Similarly, $\mathbf{w}_{t}$ can be derived. Finally, by substituting the obtained  beamformers into the expression of $\delta^{\star}$, the power allocation ratio is updated, and $(\delta^{\star}, \mathbf{w}_{s}^{\star},\mathbf{w}_{r}^{\star}, \mathbf{w}_{t}^{\star})$ for the original problem can be obtained accordingly.

Before analyzing the secrecy performance, the positive WCSR constraint is first derived in the following proposition. %\ref{prop_1}. 
%********************* proposition 1 ************************
\newtheorem{prop}{Proposition}
\begin{prop}\label{prop_1}
If and only if the received information power at the destination is larger than that at the eavesdropper,  i.e., $|\mathbf{h}_{d}^{\dag}\mathbf{w}_{t}|^2>(|\bar{\mathbf{h}}_{e}^{\dag}\mathbf{w}_{t}|+\epsilon\|\mathbf{w}_{t}\|)^2$, the positive WCSR $R_{wc}>0$ can be achieved.
\end{prop}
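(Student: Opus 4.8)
The plan is to reduce the claim to an elementary monotonicity property of the two SINR expressions. Since $\log_2$ is strictly increasing and the outer $[\,\cdot\,]^{+}$ in (\ref{eq_7}) merely clips negative values to zero, the strict positivity $R_{wc}>0$ holds if and only if $\log_2(1+\gamma_d)>\log_2(1+\gamma_{ewc})$, which is in turn equivalent to $\gamma_d>\gamma_{ewc}$. Thus it suffices to show that $\gamma_d>\gamma_{ewc}$ is equivalent to the stated channel-power condition, and this reduction already disposes of the outer logarithm and the rectifier entirely.

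The next step is to notice that the expressions in (\ref{eq_8}) and (\ref{eq_12}) are algebraically identical up to a single substituted term. Abbreviating $c:=|\mathbf{w}_{r}^{\dag}\mathbf{H}_{r}^{\dag}\mathbf{w}_{s}|^2$, $a:=|\mathbf{h}_{d}^{\dag}\mathbf{w}_{t}|^2$ and $b:=(|\bar{\mathbf{h}}_{e}^{\dag}\mathbf{w}_{t}|+\epsilon\|\mathbf{w}_{t}\|)^2$, I would check that both SINRs are instances of the single-variable map $g(x)=\frac{P_sP_r c\,x}{P_rN_0 x+N_0P_s c+N_0^2}$, with $\gamma_d=g(a)$ and $\gamma_{ewc}=g(b)$; crucially, the additive constant $N_0P_s c+N_0^2$ in the denominator is the same for both, so the only difference between destination and eavesdropper is whether $g$ is evaluated at $a$ or at $b$.

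The heart of the argument is then to establish that $g$ is strictly increasing on $[0,\infty)$. Writing $g(x)=Ax/(Bx+C)$ with $A=P_sP_r c$, $B=P_rN_0$ and $C=N_0P_s c+N_0^2$, a one-line differentiation gives $g'(x)=AC/(Bx+C)^2$, which is strictly positive whenever $A,C>0$. In any operational regime $P_s,P_r,N_0>0$, and $c>0$ as soon as the information is actually relayed, so $A,C>0$ and $g$ is strictly monotone. Hence $g(a)>g(b)$ if and only if $a>b$, i.e.\ $\gamma_d>\gamma_{ewc}$ if and only if $|\mathbf{h}_{d}^{\dag}\mathbf{w}_{t}|^2>(|\bar{\mathbf{h}}_{e}^{\dag}\mathbf{w}_{t}|+\epsilon\|\mathbf{w}_{t}\|)^2$. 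Chaining this equivalence with the reduction of the first paragraph establishes both directions at once.

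The only real subtlety, and the step I would guard most carefully, is the degenerate case $c=0$, where the effective source--relay--destination path carries no signal: there the numerator of $g$ vanishes identically, $\gamma_d=\gamma_{ewc}=0$, and the strict inequality collapses, so the equivalence must implicitly be read under the nondegeneracy $c>0$. I would state this assumption explicitly. Aside from that, the proposition is nothing more than the observation that a ratio of the form ``gain over gain-plus-constant'' is monotone in the gain, so no heavy computation is required.
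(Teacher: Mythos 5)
Your proposal is correct and follows essentially the same route as the paper's Appendix~A: reduce $R_{wc}>0$ to $\gamma_d>\gamma_{ewc}$, observe that both SINRs are the same one-variable map $g(y)=c_1y/(c_2y+c_3)$ evaluated at $|\mathbf{h}_{d}^{\dag}\mathbf{w}_{t}|^2$ and $(|\bar{\mathbf{h}}_{e}^{\dag}\mathbf{w}_{t}|+\epsilon\|\mathbf{w}_{t}\|)^2$ respectively, and invoke strict monotonicity of $g$. Your write-up is in fact slightly more complete than the paper's, since you explicitly compute $g'(x)=AC/(Bx+C)^2$ and flag the degenerate case $|\mathbf{w}_{r}^{\dag}\mathbf{H}_{r}^{\dag}\mathbf{w}_{s}|^2=0$, both of which the paper passes over with ``it can be proved.''
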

\begin{proof}
The details are given in Appendix A.
\end{proof}

%*************** subsection III-A ************************
\subsection{Power Allocation}
From the analytical expression of relay power in (\ref{eq_4}), it is easy to know that the power is associated with beamforming vectors. 
%According to equation (\ref{eq_4}), the analytical expression of relay power is associated with beamforming vectors. 
Therefore, the optimal power allocation scheme is first investigated. The WCSR optimization problem with respect to power allocation ratio $\delta$ is expressed as
\begin{equation*}
(\mathrm{P}2.1):~\max_{\delta} R_{wc}(\delta)
\end{equation*}
\begin{equation}
s.t. \left\{\begin{array}{ll} 
&P_{r}=\frac{\delta\eta P_{s}|\mathbf{h}_{r1}^{\dag}\mathbf{w}_{H}|^2}{1-\delta\eta|\mathbf{f}^{\dag}\mathbf{w}_{t}|^2}\\
& 0\leq \delta \leq 1\\
\end{array}\right..
\end{equation}
In order to get the optimal power allocation ratio, the following Proposition \ref{prop_delta} is proposed.
\begin{prop}\label{prop_delta}
With the positive WCSR constraint, 
the optimal solution of relay power allocation ratio is 
\begin{equation}\label{de_pt}
\delta^{\star}=\left\{
\begin{array}{lcl}
\frac{1}{\eta|\mathbf{f}^{\dag}\mathbf{w}_{t}|^2+Q},
&C_1~or~C_2~or~C_3\\
1, & otherwise\\
\end{array}
\right.
\end{equation}
where 
\begin{equation}
Q=\eta P_s|\mathbf{h}_{r1}^{\dag}\mathbf{w}_{H}|^2\sqrt{\frac{|\mathbf{h}_{d}^{\dag}\mathbf{w}_{t}|^2(|\bar{\mathbf{h}}_{e}^{\dag}\mathbf{w}_{t}|+\epsilon\|\mathbf{w}_{t}\|)^2}
{N_0P_s|\mathbf{w}_{r}^{\dag}\mathbf{H}_{r}^{\dag}\mathbf{w}_{s}|^2+N_0^2}}. 
\end{equation}
The constraints in (15) are denoted as 
$C_1=\{0<Q<1,\eta|\mathbf{f}^{\dag}\mathbf{w}_{t}|^2>\max(Q,1-Q)\}$, $C_2=\{Q>\frac{1}{2},1-Q<\eta|\mathbf{f}^{\dag}\mathbf{w}_{t}|^2<\min(1,Q)\}$, and $C_3=\{\frac{1}{2}<Q<1,\eta|\mathbf{f}^{\dag}\mathbf{w}_{t}|^2=Q\}$, respectively .
%$C_1=\{0<Q<1,\eta|\mathbf{f}^{\dag}\mathbf{w}_{t}|^2>\max(Q,1-Q)\}$, $C_2=\{Q>0,1-Q<\eta|\mathbf{f}^{\dag}\mathbf{w}_{t}|^2<Q\}$, and $C_3=\{Q>1/2,\eta|\mathbf{f}^{\dag}\mathbf{w}_{t}|^2=Q\}$, respectively .
\end{prop}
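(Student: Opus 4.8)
The plan is to exploit the monotonicity of the logarithm: under the positive-WCSR condition of Proposition~\ref{prop_1} the $[\cdot]^{+}$ in (\ref{eq_7}) is inactive, so maximizing $R_{wc}$ over the feasible $\delta$ is equivalent to maximizing the ratio $\frac{1+\gamma_d}{1+\gamma_{ewc}}$. First I would fix $\mathbf{w}_s,\mathbf{w}_r,\mathbf{w}_t$ (only $P_r$, hence $\delta$, varies here) and introduce the shorthands $a=|\mathbf{h}_{d}^{\dag}\mathbf{w}_{t}|^2$, $b=(|\bar{\mathbf{h}}_{e}^{\dag}\mathbf{w}_{t}|+\epsilon\|\mathbf{w}_{t}\|)^2$, $c=|\mathbf{w}_{r}^{\dag}\mathbf{H}_{r}^{\dag}\mathbf{w}_{s}|^2$, $\mu=\eta|\mathbf{f}^{\dag}\mathbf{w}_{t}|^2$, $\alpha=\eta P_s|\mathbf{h}_{r1}^{\dag}\mathbf{w}_{H}|^2$, and $D\triangleq N_0+P_s c$. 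A direct rearrangement of (\ref{eq_8}) and (\ref{eq_12}) gives the compact forms $1+\gamma_d=\frac{D(aP_r+N_0)}{N_0(aP_r+D)}$ and $1+\gamma_{ewc}=\frac{D(bP_r+N_0)}{N_0(bP_r+D)}$, so the common factor $D/N_0$ cancels and the objective reduces to $\frac{(aP_r+N_0)(bP_r+D)}{(aP_r+D)(bP_r+N_0)}$.

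Next I would subtract $1$ to isolate the $P_r$-dependence: the numerator of the resulting difference collapses to $(a-b)P_s c\,P_r$ (the $P_r^2$ and constant terms cancel), which is positive since $a>b$ by Proposition~\ref{prop_1}. Hence maximizing the ratio is the same as maximizing $\frac{P_r}{abP_r^2+(aN_0+bD)P_r+N_0D}$, i.e.\ minimizing $abP_r+\frac{N_0D}{P_r}+(aN_0+bD)$ over $P_r>0$. The two $P_r$-dependent terms are dispatched by the AM--GM inequality, whose equality case yields the unconstrained optimum $P_r^{\star}=\sqrt{N_0D/(ab)}$. I would then observe that $Q$ in the statement is exactly $\alpha\sqrt{ab/(N_0D)}$, so that $P_r^{\star}=\alpha/Q$; inverting the monotone relation $P_r=\frac{\delta\alpha}{1-\delta\mu}$ of (\ref{eq_4}) via $\delta=\frac{P_r}{\alpha+\mu P_r}$ converts $P_r^{\star}=\alpha/Q$ into $\delta^{\star}=\frac{1}{\mu+Q}=\frac{1}{\eta|\mathbf{f}^{\dag}\mathbf{w}_{t}|^2+Q}$, matching (\ref{de_pt}).

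It then remains to decide when this interior stationary point is feasible and optimal. Since $P_r$ is strictly increasing in $\delta$ on the admissible interval ($\frac{dP_r}{d\delta}=\alpha/(1-\delta\mu)^2>0$) and the reduced objective is strictly unimodal in $P_r$ (the map $P_r\mapsto abP_r+N_0D/P_r$ is strictly convex), the composite objective is unimodal in $\delta$; thus the maximizer is $\delta^{\star}=1/(\mu+Q)$ whenever this value lies in $(0,1]$, and is otherwise pushed to the boundary $\delta=1$. One checks $1-\delta^{\star}\mu=Q/(\mu+Q)>0$, so $\delta^{\star}$ is always well defined, and $\delta^{\star}\le 1\Leftrightarrow \mu+Q\ge1$, which is precisely the condition that $P_r^{\star}$ be reachable within $\delta\le1$. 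Finally I would translate the single inequality $\mu+Q\ge1$ into the piecewise regions $C_1,C_2,C_3$ by case-splitting on whether $Q\lessgtr\tfrac12,1$ and $\mu\lessgtr Q$ (in the operative regime $\mu=\eta|\mathbf{f}^{\dag}\mathbf{w}_{t}|^2<1$, which is what makes $\delta=1$ admissible), noting that on the boundary $\mu+Q=1$ the formula $1/(\mu+Q)$ already equals $1$, so the two branches agree there. The main obstacle is not the optimization itself but this last bookkeeping: verifying that $C_1\cup C_2\cup C_3$ coincides with $\{\mu+Q\ge1\}$ and that the stationary point is a maximizer (not a minimizer) on each sub-region requires careful handling of the several threshold comparisons.
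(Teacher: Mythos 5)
Your proposal is correct, and it reaches the result by a genuinely different route than the paper. The paper stays in the variable $x=\delta$: it differentiates the rate $f(x)$ and studies the sign of the quadratic numerator $h(x)=Ax^{2}+Bx+C$ of $df/dx$ through Vieta's formulas and a case split on the signs of $A$ (equivalently $\eta|\mathbf{f}^{\dag}\mathbf{w}_{t}|^{2}\lessgtr Q$) and of $h(1)$ (equivalently $(1-\eta|\mathbf{f}^{\dag}\mathbf{w}_{t}|^{2})^{2}\lessgtr Q^{2}$), reading the regions $C_1$, $C_2$, $C_3$ off the sub-cases. You instead optimize over $P_r$, collapse the ratio $(1+\gamma_d)/(1+\gamma_{ewc})$ so the problem becomes minimizing the convex function $abP_r+N_0D/P_r$, apply AM--GM, and pull the minimizer back through the monotone map $\delta\mapsto P_r$. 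Your route is shorter and more transparent: it exhibits the interior optimum as $P_r^{\star}=\sqrt{N_0D/(ab)}=\alpha/Q$; it shows that, for $\mu=\eta|\mathbf{f}^{\dag}\mathbf{w}_{t}|^{2}<1$, the paper's three regions are nothing but the single condition $\mu+Q>1$ (one can indeed verify $C_1\cup C_2\cup C_3=\{\mu+Q>1\}$ in that regime, the boundary $\mu+Q=1$ being immaterial since both branches of (\ref{de_pt}) then give the same value $1$); and it implicitly produces the roots of the paper's quadratic $h(x)$ in closed form, $1/(\mu\pm Q)$, which the paper never writes down --- so the bookkeeping you deferred is true, not a gap. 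What the paper's approach buys is that it works directly on $[0,1]$ without invoking physical feasibility; your argument leans on $P_r\ge 0$, so for $\mu\ge 1$ the admissible interval is $[0,1/\mu)$ rather than $[0,1]$. Conversely, the paper's own case~i) of part~1) silently omits the sub-case $\mu\ge 1+Q$ of $h(1)\ge 0$ (where both roots of $h$ fall inside $[0,1]$ and a boundary-versus-interior comparison would be needed), a corner that your unimodality-plus-feasibility reasoning disposes of cleanly; on balance your treatment of the endpoint cases is the more careful one.
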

\begin{proof}
See Appendix B.
\end{proof}

After obtaining the optimal power allocation ratio $\delta^{\star}$ through Proposition \ref{prop_delta}, the optimal transmit power at the relay can be written as
\begin{equation}
P_r^{\star}=\left\{
\begin{array}{lcl}
\sqrt{\frac{N_0P_s|\mathbf{w}_{r}^{\dag}\mathbf{H}_{r}^{\dag}\mathbf{w}_{s}|^2+N_0^2}{|\mathbf{h}_{d}^{\dag}\mathbf{w}_{t}|^2(|\bar{\mathbf{h}}_{e}^{\dag}\mathbf{w}_{t}|+\epsilon\|\mathbf{w}_{t}\|)^2}},
&C_1~or~C_2~or~C_3 \\
\frac{\eta P_s|\mathbf{h}_{r1}^{\dag}\mathbf{w}_{H}|^2}{1-\eta|\mathbf{f}^{\dag}\mathbf{w}_{t}|^2}, & otherwise\\
\end{array}\right..
\end{equation}

%********************** subsection III-B **************************
\subsection{Information and Receive Beamforming Design}

In expressions of $\gamma_d$ and $\gamma_{ewc}$, the source information beamformer and relay receive beamformer are always coupled with the source-to-relay channel  coefficients, i.e.， $|\mathbf{w}_{r}^{\dag}\mathbf{H}_{r}^{\dag}\mathbf{w}_{s}|^2$. 
Based on the obtained optimal power allocation ratio $\delta^{\star}$, these two beamformers can be jointly designed by maximizing WCSR  
\begin{equation*}
(\mathrm{P}2.2):~\max_{\mathbf{w}_{s},\mathbf{w}_{r}} R_{wc}(\mathbf{w}_{s},\mathbf{w}_{r})
\end{equation*}
\begin{equation}
\begin{small}
s.t. \left\{\begin{array}{ll} 
&\|\mathbf{w}_{s}\|^2=1\\
&\|\mathbf{w}_{r}\|^2=1\\
& \delta=\delta^{\star}\\
\end{array}\right..
\end{small}
\end{equation}
For finding out the optimal solution of $\mathrm{P}2.2$, the following proposition is proposed.
\begin{prop}
%With the obtained optimal solution of power allocation ratio $\delta^{\star}$, 
For any value of power allocation ratio $\delta^{\star}$, only if the maximum value of $|\mathbf{w}_{r}^{\dag}\mathbf{H}_{r}^{\dag}\mathbf{w}_{s}|^2$ is achieved the WCSR performance can be maximized. 
\end{prop}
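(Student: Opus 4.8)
The plan is to show that, once $\delta^{\star}$ and $\mathbf{w}_{t}$ are held fixed, the objective $R_{wc}$ depends on the pair $(\mathbf{w}_{s},\mathbf{w}_{r})$ \emph{only} through the scalar $t:=|\mathbf{w}_{r}^{\dag}\mathbf{H}_{r}^{\dag}\mathbf{w}_{s}|^2$, and that $R_{wc}$ is a strictly increasing function of $t$. The stated ``only if'' (indeed ``if and only if'') characterization then follows immediately. First I would fix $\delta=\delta^{\star}$ and $\mathbf{w}_{t}$ so that $P_r$ in (\ref{eq_4}) becomes a constant, and introduce the fixed quantities $a:=|\mathbf{h}_{d}^{\dag}\mathbf{w}_{t}|^2$ and $b:=(|\bar{\mathbf{h}}_{e}^{\dag}\mathbf{w}_{t}|+\epsilon\|\mathbf{w}_{t}\|)^2$. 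Inspecting (\ref{eq_8}) and (\ref{eq_12}), one sees that $\mathbf{w}_{s}$ and $\mathbf{w}_{r}$ enter both $\gamma_d$ and $\gamma_{ewc}$ solely through $t$, which establishes the claimed reduction of the two-beamformer design to a single scalar.

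The crux is an algebraic simplification. Writing $u:=P_r a$, $v:=P_r b$, and $s:=P_s t+N_0$, I would factor the numerators of $1+\gamma_d$ and $1+\gamma_{ewc}$ directly: since the denominator of $\gamma_d$ equals $N_0(u+s)$ and its numerator-plus-denominator factors as $(u+N_0)(P_s t+N_0)=(u+N_0)s$, one obtains the compact forms $1+\gamma_d=\frac{(u+N_0)s}{N_0(u+s)}$ and $1+\gamma_{ewc}=\frac{(v+N_0)s}{N_0(v+s)}$. Consequently the secrecy ratio collapses to $\frac{1+\gamma_d}{1+\gamma_{ewc}}=\frac{(u+N_0)(v+s)}{(v+N_0)(u+s)}$, which is a far more transparent object than the original quotient of quadratics in $t$.

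Next I would study the monotonicity of this ratio in $t$. Differentiating $\frac{v+s}{u+s}$ with respect to $s$ gives $\frac{u-v}{(u+s)^2}$, so the sign is governed entirely by $u-v=P_r(a-b)$. By Proposition \ref{prop_1}, the positive-WCSR regime is exactly $a>b$, whence $u>v$ and the ratio is strictly increasing in $s$, and therefore in $t$ (as $s=P_s t+N_0$ is affine increasing). Since $R_{wc}=\frac{1}{2}\big[\log_2\frac{1+\gamma_d}{1+\gamma_{ewc}}\big]^{+}$ and the ratio equals $1$ at $t=0$ but exceeds $1$ for every $t>0$ under $a>b$, the operator $[\,\cdot\,]^{+}$ is inactive at any candidate optimum, so $R_{wc}$ is strictly increasing in $t$. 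Hence $R_{wc}$ is maximized if and only if $t=|\mathbf{w}_{r}^{\dag}\mathbf{H}_{r}^{\dag}\mathbf{w}_{s}|^2$ is maximized, which is the assertion.

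The main obstacle is the factorization step: before it, $\frac{1+\gamma_d}{1+\gamma_{ewc}}$ is a ratio of two quadratics in $t$ whose monotonicity is opaque and whose direct differentiation is messy. The entire argument hinges on recognizing that each numerator factors cleanly as $(u+N_0)(P_s t+N_0)$, after which the monotonicity question reduces to the single sign condition $a>b$ already guaranteed by Proposition \ref{prop_1}.
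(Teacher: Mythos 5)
Your reduction to the scalar $t=|\mathbf{w}_{r}^{\dag}\mathbf{H}_{r}^{\dag}\mathbf{w}_{s}|^2$ and the factorization $1+\gamma_d=\frac{(u+N_0)s}{N_0(u+s)}$ are algebraically correct, but the proof rests on the claim that fixing $\delta=\delta^{\star}$ and $\mathbf{w}_{t}$ makes $P_r$ a constant, and that claim fails in exactly the regime the proposition must cover. In the paper, $\delta^{\star}$ is not a fixed number but the analytical expression from Proposition 2, and in the branch $\delta^{\star}=1/(\eta|\mathbf{f}^{\dag}\mathbf{w}_{t}|^2+Q)$ the quantity $Q$ itself contains $|\mathbf{w}_{r}^{\dag}\mathbf{H}_{r}^{\dag}\mathbf{w}_{s}|^2$; the induced relay power is $P_r^{\star}=\sqrt{(N_0P_st+N_0^2)/\bigl(|\mathbf{h}_{d}^{\dag}\mathbf{w}_{t}|^2(|\bar{\mathbf{h}}_{e}^{\dag}\mathbf{w}_{t}|+\epsilon\|\mathbf{w}_{t}\|)^2\bigr)}$, which grows like $\sqrt{t}$. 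Hence $u=P_ra$ and $v=P_rb$ are functions of $s$, your derivative $\frac{d}{ds}\frac{v+s}{u+s}=\frac{u-v}{(u+s)^2}$ omits the $du/ds$ and $dv/ds$ contributions, and the monotonicity conclusion does not follow as written. This is precisely why the paper's Appendix C splits into two cases: for this branch it substitutes $P_r^{\star}$ into the SINRs, which then contain $\sqrt{N_0P_st+N_0^2}$, and it proves monotonicity of $f(x)=\log_2\bigl(\frac{a_1g(x)+ca_2}{a_2g(x)+ca_1}\bigr)$ with $g(x)=\sqrt{P_sx+N_0}$ --- a different computation from yours.

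What you did prove is the paper's second case, $\delta^{\star}=1$, where $P_r=\eta P_s|\mathbf{h}_{r1}^{\dag}\mathbf{w}_{H}|^2/(1-\eta|\mathbf{f}^{\dag}\mathbf{w}_{t}|^2)$ genuinely is constant; there your factorization gives a clean, self-contained argument, whereas the paper defers that case to Lemma 1 of \cite{Jing2017TVT}. The gap can be closed in either of two ways: (i) redo the computation in the first branch with $u=\sqrt{aN_0s/b}$ and $v=\sqrt{bN_0s/a}$, which collapses the secrecy ratio to $\bigl(\frac{\alpha\sqrt{N_0s}+N_0}{\sqrt{N_0s}+\alpha N_0}\bigr)^2$ with $\alpha=\sqrt{a/b}>1$, again strictly increasing in $s$; or (ii) avoid the case split entirely with an envelope argument: since your constant-$P_r$ computation shows $R_{wc}(t,\delta)$ is nondecreasing in $t$ for every fixed $\delta\in[0,1]$, the optimized value $\max_{\delta\in[0,1]}R_{wc}(t,\delta)=R_{wc}(t,\delta^{\star}(t))$ is also nondecreasing in $t$. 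You stated neither, so as written the proof does not establish the proposition's claim ``for any value of $\delta^{\star}$.''
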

\begin{proof}
See Appendix C.
\end{proof}

Consequently, the problem $\mathrm{P}2.2$ is equivalent to maximize $|\mathbf{w}_{r}^{\dag}\mathbf{H}_{r}^{\dag}\mathbf{w}_{s}|^2$ under power constraints, i.e.,
\begin{equation*}
(\mathrm{P}2.2.1):~\max_{\mathbf{w}_{s},\mathbf{w}_{r}} |\mathbf{w}_{r}^{\dag} \mathbf{H}_{r}^{\dag}\mathbf{w}_{s}|^2
\end{equation*}
\begin{equation}
s.t. \left\{\begin{array}{ll} 
&\|\mathbf{w}_{s}\|^2=1\\
&\|\mathbf{w}_{r}\|^2=1\\
\end{array}\right..
\end{equation}
%Using singular value decomposition (SVD), we can decompose $\mathbf{H}_{r}^{\dag}$ as
Doing singular value decomposition (SVD) of $\mathbf{H}_{r}^{\dag}$, we have 
$\mathbf{H}_{r}^{\dag}=\mathbf{U}\mathbf{\Sigma}\mathbf{V}^{\dag}$, where 
$\mathbf{\Sigma} \in \mathbb{R}^{(M+1)\times N}$  is a rectangular diagonal matrix,  whose diagonal entries $\sigma_{i}$ are known as singular values of $\mathbf{H}_{r}^{\dag}$. 
Both $\mathbf{U} \in \mathbb{C}^{(M+1)\times (M+1)}$  and  $\mathbf{V} \in \mathbb{C}^{N\times N}$ are unitary matrices, the columns of $\mathbf{U}$ and the columns of $\mathbf{V} $ are left-singular vectors and right-singular vectors of $\mathbf{H}_{r}^{\dag}$, respectively.
Based on the SVD properties of matrix $\mathbf{H}_{r}^{\dag}$, the optimal solution of problem $\mathrm{P}2.2.1$ can be obtained
\begin{equation}\label{eq_23}
\left(\mathbf{w}_{s}^{\star}, \mathbf{w}_{r}^{\star}\right)=\left(\mathbf{v}^{unit}, \mathbf{u}^{unit}\right),
\end{equation}
where $\mathbf{v}^{unit}$ and $\mathbf{u}^{unit}$ are 
the right-singular vector and the left-singular vector corresponding to the largest singular value, with $\|\mathbf{u}^{unit}\|^2=1$, $\|\mathbf{v}^{unit}\|^2=1$. 
Since problems $\mathrm{P}2.2$ and $\mathrm{P}2.2.1$ are equivalent, the solution in  (\ref{eq_23}) is also the optimal solution of the original problem $\mathrm{P}2.2$.

%*******************  subsection III-C *********************
\subsection{Relay Transmit Beamforming Design}

After obtaining the optimal information beamformer, the optimal receive beamformer as well as the optimal power allocation ratio, 
the original WCSR optimization problem $\mathrm{P}1$ is simplified to be a function of the transmit beamformer $\mathbf{w}_{t}$. 
To degrade the information leakage and null out the confidential signals at the eavesdropper, zero-forcing (ZF) beamforming is adopted, i.e., 
 $\bar{\mathbf{h}}_{e}^{\dag}\mathbf{w}_{t}=0$. 
Hence, the WCSR optimization problem with respect to $\mathbf{w}_{t}$ is reformulated as 
\begin{equation*}
(\mathrm{P}2.3):~\max_{\mathbf{w}_{t}} R_{wc}(\mathbf{w}_{t})
\end{equation*}
\begin{equation}
\begin{small}
s.t.\left\{\begin{array}{ll}  
& \bar{\mathbf{h}}_{e}^{\dag}\mathbf{w}_{t}=0\\
& \|\mathbf{w}_{t}\|^2=1\\
& (\mathbf{w}_{s}, \mathbf{w}_{r})=
(\mathbf{w}_{s}^{\star}, \mathbf{w}_{r}^{\star})\\
& \delta=\delta^{\star}\\
\end{array}\right..
\end{small}
\end{equation}
To facilitate ZF beamforming design, a matrix $\mathbf{B}$ is defined to accommodate the orthogonal vectors that form the basis of the null space of $\bar{\mathbf{h}}_{e}^{\dag}$, i.e., the right singular vectors corresponding to zero singular values of $\bar{\mathbf{h}}_{e}^{\dag}$. 
Thereby, $\mathbf{w}_{t}$ is designed as a linear combination of the basis of the null space of $\bar{\mathbf{h}}_{e}^{\dag}$, i.e., $\mathbf{w}_{t}^{\star}=\mathbf{B}\mathbf{v}$, where $\mathbf{v}$ is a column vector. 
 As the optimal $\delta^{\star}$ in a piecewise expression,  
the solution of the transmit beamforming vector also has two possibilities, which are analyzed in the following.
%*******************************
\subsubsection{Part of harvested energy for information transmission}

For $\delta^{\star}=1/(\eta|\mathbf{f}^{\dag}\mathbf{w}_{t}|^2+Q)$, the received SINRs at the destination and eavesdropper can be rewritten as 
\begin{equation}
\begin{small}
\gamma_{d}=\frac{P_s|\mathbf{w}_{r}^{\dag}\mathbf{H}_{r}^{\dag}\mathbf{w}_{s}|^2|\mathbf{h}_{d}^{\dag}\mathbf{B}\mathbf{v}|}{N_0|\mathbf{h}_{d}^{\dag}\mathbf{B}\mathbf{v}|+
\epsilon\|\mathbf{B}\mathbf{v}\|
\sqrt{N_0P_s|\mathbf{w}_{r}^{\dag}\mathbf{H}_{r}^{\dag}\mathbf{w}_{s}|^2+N_0^2}},
\end{small}
\end{equation}
\begin{equation}
\begin{small}
\gamma_{e}=\frac{P_s|\mathbf{w}_{r}^{\dag}\mathbf{H}_{r}^{\dag}\mathbf{w}_{s}|^2
\epsilon\|\mathbf{B}\mathbf{v}\|}{N_0\epsilon\|\mathbf{B}\mathbf{v}\|+|\mathbf{h}_{d}^{\dag}\mathbf{B}\mathbf{v}|\sqrt{(N_0P_s\mathbf{w}_{r}^{\dag}\mathbf{H}_{r}^{\dag}\mathbf{w}_{s}|^2\!+\!N_0^2)}}.
\end{small}
\end{equation}

It is obvious that $\gamma_d$ is an increasing function of $|\mathbf{h}_{d}^{\dag}\mathbf{B}\mathbf{v}|/\epsilon\|\mathbf{B}\mathbf{v}\|$, while $\gamma_e$ is a decreasing function of that.  
Therefore, the WCSR maximization problem $\mathrm{P}2.3$ is equivalent to maximizing the ratio of $|\mathbf{h}_{d}^{\dag}\mathbf{B}\mathbf{v}|^2$ and $\epsilon^2\|\mathbf{B}\mathbf{v}\|^2$, i.e.,
\begin{equation*}
\begin{small}
(\mathrm{P}2.3.1):\max_{\mathbf{v}} \frac{\mathbf{v}^{\dag}\mathbf{B}^{\dag}\mathbf{h}_{d}\mathbf{h}_{d}^{\dag}\mathbf{B}\mathbf{v}}{ \epsilon^2\mathbf{v}^{\dag}\mathbf{B}^{\dag}\mathbf{B}\mathbf{v}}
\end{small}
\end{equation*}
\begin{equation}\label{eq_27}
\begin{small}
s.t.~\mathbf{v}^{\dag}\mathbf{B}^{\dag}\mathbf{B}\mathbf{v}=1.
\end{small}
\end{equation}
The objective function of (\ref{eq_27}) is a generalized eigenvector problem, and 
the solution of which is associated with the eigenvector of the matrix $\frac{1}{\epsilon^2}\mathbf{B}^{\dag}\mathbf{h}_{d}\mathbf{h}_{d}^{\dag}
\mathbf{B}$ \cite{LunDong2010, Jing2015}. Let $\mathbf{q}^{unit}_{1}$ be the unit-norm eigenvector of this matrix corresponding to its maximum eigenvalue, we have $\mathbf{v}^{\star}=\mu_1\mathbf{q}^{unit}_{1}$,  
where $\mu_1$ is a scalar determined by the power constraint, and is equal to
\begin{equation}
\mu_1=\sqrt{\frac{1}{(\mathbf{q}^{unit}_{1})^{\dag}\mathbf{B}^{\dag}\mathbf{B}\mathbf{q}^{unit}_{1}}}.
\end{equation}

Substituting $\mathbf{v}^{\star}$ into (\ref{de_pt}), if one of the constraints $C_1$, $C_2$ or $C_3$ is satisfied, 
%这里三个条件，仅有一个条件满足就可以取到最优的解,并且同时也只能满足一个条件
the optimal transmit beamforming vector can be obtained 
\begin{equation}\label{eq_w1}
\mathbf{w}_{t}^{\star}=\sqrt{\frac{1}{(\mathbf{q}^{unit}_{1})^{\dag}\mathbf{B}^{\dag}\mathbf{B}\mathbf{q}^{unit}_{1}}}\mathbf{B}\mathbf{q}^{unit}_{1}.
\end{equation}
Otherwise, $\mathbf{w}_{t}^{\star}$ should be solved in the following part. 

%************subsubsection III-C-2 ***************
\subsubsection{All harvested energy for information transmission}

For $\delta^{\star}=1$, the optimization problem of WCSR can be readily formulated as
\begin{equation*}
(\mathrm{P}2.3.2):\max_{\mathbf{v}}  \frac{\mathbf{v}^{\dag}\tilde{\mathbf{R}}_{fE}\mathbf{v}}{\mathbf{v}^{\dag}\tilde{\mathbf{R}}_{fD}\mathbf{v}}\cdot
\frac{\mathbf{v}^{\dag}\tilde{\mathbf{R}}_{RD}\mathbf{v}}{\mathbf{v}^{\dag}\tilde{\mathbf{R}}_{RE}\mathbf{v}}
\end{equation*}
\begin{equation}\label{eq_30}
s.t.~\mathbf{v}^{\dag}\mathbf{B}^{\dag}\mathbf{B}\mathbf{v}=1,
\end{equation}
where matrices $ \mathbf{R}_{RD}=\eta P_s^2|\mathbf{h}_{r1}^{\dag}\mathbf{w}_{H}|^2|\mathbf{w}_{r}^{\dag}\mathbf{H}_{r}^{\dag}\mathbf{w}_{s}|^2\mathbf{B}^{\dag}\mathbf{h}_{d}\mathbf{h}_{d}^{\dag}\mathbf{B}$, $\tilde{\mathbf{R}}_{fD}=(N_0P_s|\mathbf{w}_{r}^{\dag}\mathbf{H}_{r}^{\dag}\mathbf{w}_{s}|^2+N_0^2)(\mathbf{B}^{\dag}\mathbf{B}-\eta\mathbf{B}^{\dag}\mathbf{f}\mathbf{f}^{\dag}\mathbf{B})
+\eta P_sN_0|\mathbf{h}_{r1}^{\dag}\mathbf{w}_{H}|^2\mathbf{B}^{\dag}\mathbf{h}_{d}\mathbf{h}_{d}^{\dag}\mathbf{B}$, 
and $\tilde{\mathbf{R}}_{RD}=\mathbf{R}_{RD}+\tilde{\mathbf{R}}_{fD}$. 
Similarly, matrices $\mathbf{R}_{RE}=\eta P_s^2|\mathbf{h}_{r1}^{\dag}\mathbf{w}_{H}|^2|\mathbf{w}_{r}^{\dag}\mathbf{H}_{r}^{\dag}\mathbf{w}_{s}|^2\epsilon^2\mathbf{B}^{\dag}\mathbf{B}$, 
$\tilde{\mathbf{R}}_{fE}=(N_0P_s|\mathbf{w}_{r}^{\dag}\mathbf{H}_{r}^{\dag}\mathbf{w}_{s}|^2+N_0^2)(\mathbf{B}^{\dag}\mathbf{B}-\eta\mathbf{B}^{\dag}\mathbf{f}\mathbf{f}^{\dag}\mathbf{B})
+\eta P_sN_0|\mathbf{h}_{r1}^{\dag}\mathbf{w}_{H}|^2\epsilon^2\mathbf{B}^{\dag}\mathbf{B}$, 
and $\tilde{\mathbf{R}}_{RE}=\mathbf{R}_{RE}+\tilde{\mathbf{R}}_{fE}$.

Since both $\tilde{\mathbf{R}}_{fD}$ and $\tilde{\mathbf{R}}_{fE}$ are diagonal matrices, the objective function of $\mathrm{P}2.3.2$ is a product of two correlated generalized eigenvector problems, which is 
in general intractable \cite{LunDong2010}. 
To simplify the analysis and 
solve the problem, a suboptimal solution is proposed to maximize the upper and lower bounds of the objective function of (\ref{eq_30}). 

As it is known, the maximum value and the minimum value of the ratio $\mathbf{v}^{\dag}\tilde{\mathbf{R}}_{fE}\mathbf{v}/\mathbf{v}^{\dag}
\tilde{\mathbf{R}}_{fD}\mathbf{v}$
are corresponding to the maximum eigenvalue $\lambda_{max}$ and the minimum eigenvalue $\lambda_{min}$ of the matrix $\tilde{\mathbf{R}}_{fD}^{-1}\tilde{\mathbf{R}}_{fE}$, respectively. 
Therefore, the lower and upper bounds of the objective function can be written as
\begin{equation}\label{eq_31}
\begin{small}
\lambda_{min}
\frac{\mathbf{v}^{\dag}\tilde{\mathbf{R}}_{RD}\mathbf{v}}{\mathbf{v}^{\dag}\tilde{\mathbf{R}}_{RE}\mathbf{v}}\leq \frac{\mathbf{v}^{\dag}\tilde{\mathbf{R}}_{fE}\mathbf{v}}{\mathbf{v}^{\dag}\tilde{\mathbf{R}}_{fD}\mathbf{v}}\cdot
\frac{\mathbf{v}^{\dag}\tilde{\mathbf{R}}_{RD}\mathbf{v}}{\mathbf{v}^{\dag}\tilde{\mathbf{R}}_{RE}\mathbf{v}} \leq \lambda_{max}
\frac{\mathbf{v}^{\dag}\tilde{\mathbf{R}}_{RD}\mathbf{v}}{\mathbf{v}^{\dag}\tilde{\mathbf{R}}_{RE}\mathbf{v}},
\end{small}
\end{equation}
in which the analytical expressions of $\lambda_{max}$ and $\lambda_{min}$ are given on the top of next page.
\begin{figure*}[ht]
\normalsize
\setcounter{mytempeqncnt}{\value{equation}}
\setcounter{equation}{28}
\begin{equation}\label{eq_32}
\lambda_{max}=\max_{i}\left(\frac{\eta P_s|\mathbf{h}_{r1}^{\dag}\mathbf{w}_{H}|^2\epsilon^2+(P_s|\mathbf{w}_{r}^{\dag}\mathbf{H}_{r}^{\dag}\mathbf{w}_{s}|^2+N_0)(1-\eta|\mathbf{f}(i)|^2)}
{\eta P_s|\mathbf{h}_{r1}^{\dag}\mathbf{w}_{H}|^2|\mathbf{h}_{d}(i)|^2+(P_s|\mathbf{w}_{r}^{\dag}\mathbf{H}_{r}^{\dag}\mathbf{w}_{s}|^2+N_0)(1-\eta|\mathbf{f}(i)|^2)}\right),
\end{equation}
\setcounter{equation}{29}
\begin{equation}\label{eq_33}
\lambda_{min}=\min_{i}\left(\frac{\eta P_s|\mathbf{h}_{r1}^{\dag}\mathbf{w}_{H}|^2\epsilon^2+(P_s|\mathbf{w}_{r}^{\dag}\mathbf{H}_{r}^{\dag}\mathbf{w}_{s}|^2+N_0)(1-\eta|\mathbf{f}(i)|^2)}
{\eta P_s|\mathbf{h}_{r1}^{\dag}\mathbf{w}_{H}|^2|\mathbf{h}_{d}(i)|^2+(P_s|\mathbf{w}_{r}^{\dag}\mathbf{H}_{r}^{\dag}\mathbf{w}_{s}|^2+N_0)(1-\eta|\mathbf{f}(i)|^2)}\right).
\end{equation}
\setcounter{equation}{\value{mytempeqncnt}}
\hrulefill
\vspace*{4pt} 
\end{figure*}
\setcounter{equation}{30}

%解释solution的由来
With the given $\lambda_{max}$ and $\lambda_{min}$, the solution of $\mathbf{v}$ maximizing the lower or upper bound in (\ref{eq_31}) is also the solution of the generalized eigenvector problem $\mathbf{v}^{\dag}\tilde{\mathbf{R}}_{RD}\mathbf{v}
/\mathbf{v}^{\dag}\tilde{\mathbf{R}}_{RE}\mathbf{v}$, i.e., 
$\mathbf{v}^{\star}=\mu_2\mathbf{q}_{2}^{unit}$,   
wherein $\mathbf{q}_{2}^{unit}$ is the unit-norm eigenvector of matrix $\tilde{\mathbf{R}}_{RE}^{-1}\tilde{\mathbf{R}}_{RD}$ corresponding to its maximum eigenvalue, and $\mu_2$ is a scalar chosen to satisfy the power constraint,  $\mu_2=\sqrt{1/[(\mathbf{q}^{unit}_{2})^{\dag}\mathbf{B}^{\dag}
\mathbf{B}\mathbf{q}^{unit}_{2}]}$. 
\begin{rem}
The above suboptimal solution works well when $\lambda_{\max}\approx\lambda_{\min}$. Based on (\ref{eq_32}) and (\ref{eq_33}), $\lambda_{\max}\approx\lambda_{\min}$ holds if the signal power at the relay is much greater than that at the destination and eavesdropper, i.e., $P_s|\mathbf{w}_{r}^{\dag}\mathbf{H}_{r}^{\dag}\mathbf{w}_{s}|^2 \gg \frac{\eta P_s|\mathbf{h}_{r1}^{\dag}\mathbf{w}_{H}|^2}{1-\eta|\mathbf{f}(i)|^2}|\mathbf{h}_{d}(i)|^2$ and $P_s|\mathbf{w}_{r}^{\dag}\mathbf{H}_{r}^{\dag}\mathbf{w}_{s}|^2 \gg \frac{\eta P_s|\mathbf{h}_{r1}^{\dag}\mathbf{w}_{H}|^2}{1-\eta|\mathbf{f}(i)|^2}\epsilon^2
$ for $i=1,2,\ldots, M$. 
%上下界是紧的，也就意味着上下界几乎重合，所以优化上下界的解也就是近似最优的解。
Since in this case, the bounds in (\ref{eq_31}) are tight and the solution $\mathbf{w}_{t}^{\star}=\mu_{2}\mathbf{B}\mathbf{q}_{2}^{unit}$ that maximizes the bounds of WCSR is near-optimal. 
If $\lambda_{\max}\approx\lambda_{\min}$ does not hold,
the above method may not perform well and 
the other suboptimal method proposed in the following could be used instead.
\end{rem}

\begin{itemize}
\item %Maximizing $R_d$ with given information leakage threshold.
Given an information leakage threshold ($\frac{1}{2}\log_2(1+\gamma_{ewc})\leq \gamma_0$), %($R_{ewc}\leq \gamma_0$), 
the information rate maximization problem at destination can be formulated as
\begin{equation*}
(\mathrm{P}2.3.3):\max_{\mathbf{v}}  \frac{\mathbf{v}^{\dag}\tilde{\mathbf{R}}_{RD}\mathbf{v}}{\mathbf{v}^{\dag}\tilde{\mathbf{R}}_{fD}\mathbf{v}}
\end{equation*}
\begin{equation}\label{eq_34}
s.t. \left\{\begin{array}{ll} 
&\mathbf{v}^{\dag}\mathbf{B}^{\dag}\mathbf{f}^{\dag}\mathbf{f}\mathbf{B}\mathbf{v} \leq g(\epsilon)\\
&\mathbf{v}^{\dag}\mathbf{B}^{\dag}\mathbf{B}\mathbf{v}=1\\
\end{array}\right..
\end{equation}
where 
\begin{equation}
\begin{small}
g(\epsilon)=\frac{1}{\eta}\!-\!\frac{P_s|\mathbf{h}_{r1}^{\dag}\mathbf{w}_{H}|^2\epsilon^2}{P_s|\mathbf{w}_{r}^{\dag}\mathbf{H}_{r}^{\dag}\mathbf{w}_{s}|^2\!+\!N_0}\left(\frac{P_s|\mathbf{w}_{r}^{\dag}\mathbf{H}_{r}^{\dag}\mathbf{w}_{s}|^2}{N_0(2^{2 \gamma_0}-1)}-\!1\!\right).
\end{small}
\end{equation}

To solve the above problem, an extra coefficient $\zeta \in [0,1]$ is introduced and used to form the inequality constraint in (\ref{eq_34}) as an equality constraint, $\mathbf{v}^{\dag}\mathbf{B}^{\dag}\mathbf{f}^{\dag}\mathbf{f}\mathbf{B}\mathbf{v} =\zeta g(\epsilon)$. 
By substituting the reformed inequality constraint and the other equality constraint into the objective function, the optimization problem $\mathrm{P}2.3.3$ can be rewritten as a generalized eigenvector problem. 
Then, the optimal $\mathbf{v}^{\star}=\mu_3\mathbf{q}^{unit}_{3}$ is obtained, 
where $\mathbf{q}^{unit}_{3}$ is the unit-norm eigenvector of the matrix $\hat{\mathbf{R}}_{fD}^{-1}\hat{\mathbf{R}}_{RD}$
corresponding to its maximum eigenvalue, in which  $\hat{\mathbf{R}}_{fD}=\eta P_sN_0|\mathbf{h}_{r1}^{\dag}\mathbf{w}_{H}|^2\mathbf{B}^{\dag}\mathbf{h}_{d}\mathbf{h}_{d}^{\dag}\mathbf{B}+
(N_0P_s|\mathbf{w}_{r}^{\dag}\mathbf{H}_{r}^{\dag}\mathbf{w}_{s}|^2+N_0^2)(1-\eta\zeta g(\epsilon))\mathbf{B}^{\dag}\mathbf{B}$ and $\hat{\mathbf{R}}_{RD}=\mathbf{R}_{RD}+\hat{\mathbf{R}}_{fD}$. 
$\mu_3$ in the solution is a scalar guaranteeing the power constraint, and $\mu_3=\sqrt{1/[(\mathbf{q}^{unit}_{3})^{\dag}\mathbf{B}^{\dag}
\mathbf{B}\mathbf{q}^{unit}_{3}]}$. 
In addition, the optimal value of the introduced coefficient $\zeta$ can be obtained  by bisection method. 
Finally, the optimal solution of transmit beamformer is derived to be $\mathbf{w}_{t}^{\star}=\mu_3\mathbf{B}\mathbf{q}^{unit}_{3}$. 
\end{itemize}

%************* section IV-D  *********************
\subsection{WCSR Optimization Algorithm Description}

With the obtained transmit beamformer at relay, the power allocation ratio can be updated. 
Since $\mathbf{w}_{r}$ and $\mathbf{w}_{s}$ are only associated with the source-to-relay channel coefficients, 
they are proved to have no change in Proposition 3. 
Consequently, to maximize WCSR, we can first calculate the optimal information and receive beamformers, and then achieve the transmit beamformer and power allocation ratio iteratively.
The whole process of WCSR optimization is summarized in the following Algorithm 1.

%*********************   Algorithm 1  ***************
\begin{algorithm}
  \caption{Worst-Case Secrecy Rate Optimization}
  \label{alg:optimal}
  \begin{algorithmic}[1]
  \STATE 
  Do SVD decomposition of the source-to-relay channel matrix $\mathbf{H}_{r}$, $\mathbf{H}_{r}^{\dag}=\mathbf{U}\mathbf{\Sigma}\mathbf{V}^{\dag}$.  
  \STATE Find out the largest singular value from  $\mathbf{\Sigma}$, and get its corresponding right-singular vector and left-singular vector $|\mathbf{u}^{unit}\|^2=1$, $\|\mathbf{v}^{unit}\|^2=1$.  
  \STATE Get the optimal $ \left(\mathbf{w}_{s}^{\star}, \mathbf{w}_{r}^{\star}\right)=\left(\mathbf{v}^{unit}, \mathbf{u}^{unit}\right)$.
 
  \STATE Calculate the transmit beamforming vector $\mathbf{w}_{t}^{\star}$ based on equation (\ref{eq_w1}).

  \STATE Check whether one of constraints $C_1$, $C_2$ and $C_3$ is satisfied with the obtained $\mathbf{w}_{t}^{\star}$.
  \IF{$C_1|C_2|C_3$}
  \STATE The power allocation ratio $\delta^{\star}=1/(\eta|\mathbf{f}^{\dag}\mathbf{w}_{t}^{\star}|^2+Q)$. 
  \ELSE 
  \STATE The power allocation ratio $\delta^{\star}=1$. $\mathbf{w}_{t}^{\star}$ is calculated by using methods proposed in Section III-C-2. 
  \ENDIF  
  \STATE Calculate $R_{wc}(\delta^{\star},\mathbf{w}_{s}^{\star}, \mathbf{w}_{r}^{\star},\mathbf{w}_{t}^{\star})$.
 \end{algorithmic}
\end{algorithm}

It should be noted that since $ (\delta^{\star},\mathbf{w}_{s}^{\star},\mathbf{w}_{r}^{\star},\mathbf{w}_{t}^{\star})$ of $\mathrm{P}1$ is obtained iteratively, 
the optimality of them depends on the optimality of the obtained transmit beamformer. If the transmit beamformer $\mathbf{w}_{t}^{\star}$ in (\ref{eq_w1}) can be achieved, the final solution is optimal for the original problem $\mathrm{P}1$. Otherwise, the obtained solution is suboptimal. %near-optimal or suboptimal. 

%************* section IV-E  *********************
\section{Time Switching Based Relaying}
%******************* Fig 3 ************************
\begin{figure}[h]
\centering
\includegraphics[width=0.4\textwidth]{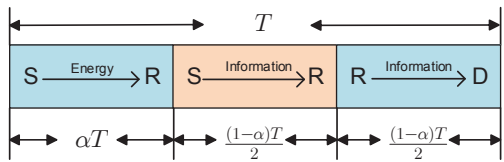}
\caption{Time switching based wireless powered relaying system model.}
\label{TS_HD_model}
\end{figure} 
TSR secure system \cite{Zeng2015WCL,Nasir2013TWC} is included as a benchmark for performance comparisons. The secure communication based on TSR protocol consists of three phases, as shown in \figref{TS_HD_model}. One phase for energy harvesting and the other two phases for information transmission from source to destination with the help of a HD relay. 
Using TSR, the received SINR at the destination is expressed as
\begin{equation}
\gamma_d^{TSR}=\frac{P_s P_r|\mathbf{h}_{d}^{\dag}\mathbf{w}_{t}|^2|\mathbf{h}_{r1}^{\dag}\mathbf{w}_{s}|^2}{P_r N_0|\mathbf{h}_{d}^{\dag}\mathbf{w}_{t}|^2+P_s N_0|\mathbf{h}_{r1}^{\dag}\mathbf{w}_{s}|^2+N_0^2}, 
\end{equation}
where $P_r$ denotes transmit power at the relay, $P_r=\delta \frac{2\alpha}{1-\alpha}\eta P_{s}|\mathbf{h}_{r1}^{\dag}\mathbf{w}_{H}|^2$, and $\delta$ is defined as the power allocation ratio, $0\leq \delta \leq 1$. 
Additionally, beamformers $\mathbf{w}_{s}$ and $\mathbf{w}_{H}$ are both designed based on MRT, i.e., $\mathbf{w}_{s}=\frac{\mathbf{h}_{r1}}{\|\mathbf{h}_{r1}\|}$ and $\mathbf{w}_{H}=\frac{\mathbf{h}_{r1}}{\|\mathbf{h}_{r1}\|}$.

The received SINR at the eavesdropper can be expressed as
\begin{equation}
\gamma_{ewc}^{TSR}=\frac{P_s P_r(|\bar{\mathbf{h}}_{e}^{\dag}\mathbf{w}_{t}|+\epsilon\|\mathbf{w}_{t}\|)^2|\mathbf{h}_{r1}^{\dag}\mathbf{w}_{s}|^2}{P_r N_0(|\bar{\mathbf{h}}_{e}^{\dag}\mathbf{w}_{t}|\!+\!\epsilon\|\mathbf{w}_{t}\|)^2\!+\!P_s N_0|\mathbf{h}_{r1}^{\dag}\mathbf{w}_{s}|^2\!+\!N_0^2}.\\
\end{equation}

Then, the WCSR for TSR secure systems is expressed as
\begin{equation}
R_{wc}=\frac{1-\alpha}{2}[\log_{2}(1+\gamma_d^{TSR})-\log_2(1+\gamma_{ewc}^{TSR})]^{+}.
\end{equation}
Finally, the WCSR maximization problem is formulated as
\begin{equation*}
(\mathrm{P}3):~\max_{\alpha,\delta,\mathbf{w}_{t}} R_{wc}(\alpha,\delta,\mathbf{w}_{t})
\end{equation*}
\begin{equation}
s.t. \left\{\begin{array}{ll} 
&P_{r}=2\delta P_{s}\frac{\eta \alpha}{1-\alpha} |\mathbf{h}_{r1}^{\dag}\mathbf{w}_{H}|^2\\
&\|\mathbf{w}_{t}\|^2= 1\\
&0< \alpha < 1\\
&0\leq \delta \leq 1\\ 
\end{array}\right..
\end{equation} 

It is obvious that the transmit beamformer $\mathbf{w}_{t}$ and the power allocation ratio $\delta$ as well as the TS coefficient $\alpha$ are coupled with each other, and the above problem is non-convex. 
To solve problem $\mathrm{P}3$ effectively, an iterative algorithm is summarized in Algorithm 2.
%To obtain the solutions of $\mathrm{P}3$, {\color{red}{an iterative algorithm}} is summarized in Algorithm 2. 
%**************** Algorithm 2 *********************
\begin{algorithm}
  \caption{Iterative Algorithm for WCSR Optimization}
  \label{alg:Alternate}
  \begin{algorithmic}[1]
  
  \STATE Initialize $\mathbf{w}_{s}=\frac{\mathbf{h}_{r1}}{\|\mathbf{h}_{r1}\|}$,  $\mathbf{w}_{H}=\frac{\mathbf{h}_{r1}}{\|\mathbf{h}_{r1}\|}$, 
  $\mathbf{w}_{t}^{0}$, $\alpha^{0}$, $\delta^{0}$, $\varepsilon$, 
calculate $R_{wc}(\mathbf{w}_{t}^{0},\alpha^{0},\delta^{0})$, and set $n=0$.

  \REPEAT  
  \STATE $n=n+1$
  
  \STATE For the fixed $\mathbf{w}_{t}^{n-1},\alpha^{n-1}$, find $\delta^{n}$ using \emph{Lemma 1} proposed in \cite{Jing2017TVT}.
  
  \STATE With $\delta^{n}$ and fixed $\alpha^{n-1}$, calculate $\mathbf{w}_{t}^{n}$ using an approach similar to Section III-C.
  
  \STATE For the obtained $\delta^{n}$ and $\mathbf{w}_{t}^{n}$, find the numerical solution of $\alpha^{n}$, and calculate the secrecy rate $R_{wc}^{n}=R_{wc}(\mathbf{w}_{t}^{n},\alpha^{n},\delta^{n})$.
  
  \UNTIL {$|R_{wc}^{n}-R_{wc}^{n-1}|\leq \varepsilon$}.
  
  \RETURN $(\mathbf{w}_{t}^{\star},\alpha^{\star},\delta^{\star})=
  (\mathbf{w}_{t}^{n},\alpha^{n},\delta^{n})$
  \end{algorithmic}
\end{algorithm}
%********************************************************

%************* section V  *********************
\section{Simulation Results and Analysis}
In this section, to validate the accuracy and effectiveness of the theoretical results developed in previous sections and evaluate WCSR performance of the wireless-powered relay system with S-ER, Monte Carlo simulations are done. 
All the simulation results are obtained by averaging over $10^{4}$ independent trials.
Unless otherwise specified, the source has five antennas ($N=5$), while the wireless-powered relay is equipped with four antennas, in which one antenna is used for information reception and energy harvesting, the other three antennas ($M=3$) used for both information transmission and reception. 
Without loss of generality, 
the source-relay distance and the relay-destination distance are both normalized to unit value \cite{cai2014}, and the relay-eavesdropper distance $d_{re}$ is set to be $1.2 d_{rd}$. 
All channels are assumed to be Rayleigh fading channels, and the mean value of each channel coefficient is associated with the distance between two nodes, i.e., $\sigma_{s}^2=d_{sr}^{-m}$, $\sigma_{d}^2=d_{rd}^{-m}$ and $\sigma_{e}^2=d_{re}^{-m}$, in which the path loss exponent is set to be $m=3$.
Besides, the mean value of LI channel coefficient $\|\mathbf{f}\|^2$ is set to be $\lambda_{f}=0.2$, and the energy harvesting efficiency $\eta=0.8$.

%************* subsection V-A ************************
\subsection{Uncertainty Region of ECSI}

%************************ Fig 4  ************************
\begin{figure}[t]
\centering
\includegraphics[width=0.5\textwidth]{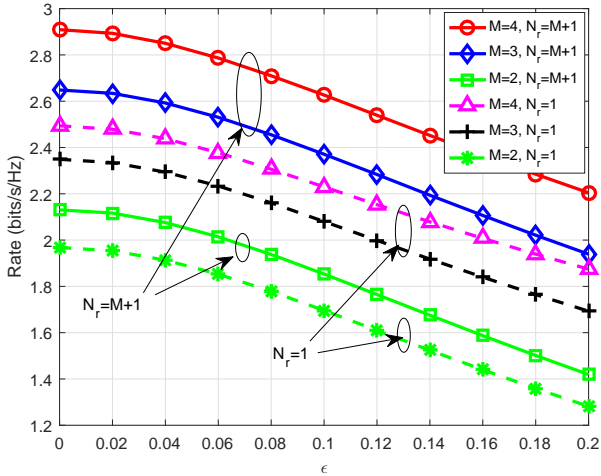}
\caption{WCSR versus the level of bounded uncertainty region of ECSI with different number of transmit antennas at the relay $M=[2,3,4]$. Additionally, 
the information leakage threshold %at the eavesdropped 
is set to be $\gamma_0=10^{-6}$~bits/s/Hz.}
\label{fig_err}
\end{figure}
In a multi-antenna secure system, it is generally known that the secure beamformer design depends on the availability of CSI, especially the CSI of wiretap channels. 
If the source and the relay have only partial ECSI, there is a high probability in information leakage at the eavesdropper, finally leading to WCSR loss. 
Thus, in order to show how important it is to
take imperfect ECSI into account for secure beamformer design, 
the WCSR performance versus the uncertainty region of ECSI is simulated. Results are shown in \figref{fig_err} and  reveal that the increase of the level of the bounded eavesdropper channel uncertainty has destructive effect on the system performance. That is because this makes the secure beamformer design inefficient, leading to more information leakage to the eavesdropper, finally resulting in substantial WCSR degradation. 
In addition, results also show that
compared with the scheme exploiting only one receive antenna for information reception (i.e., $N_r=1$), the antenna reuse scheme (i.e., $N_r=M+1$) achieves better WCSR performance. 
Intuitively, this is because more spatial degrees of freedom offered by reusing transmit antenna elements facilitate the information reception, enhance the received signal strength at relay.
%and enhance the strength of information signals received at the relay.  
This results in an increase of received information rate at the destination, hence brings more WCSR gains. %leads to more WCSR gains. % compared with that of only one receiving antenna case. 
Moreover, putting more antennas at relay $M=[2,3,4]$ would also offer more spatial degrees of freedom for information retransmission, and consequently improve the system WCSR. 
This is rather intuitive, since more antennas result in sharper information beams, which in turn  yield higher secrecy performance.

%************* subsection V-B ************************
\subsection{Power Allocation}

%**********************   Fig 5  *************************
\begin{figure}[!t]
\centering
\includegraphics[width=0.5\textwidth]{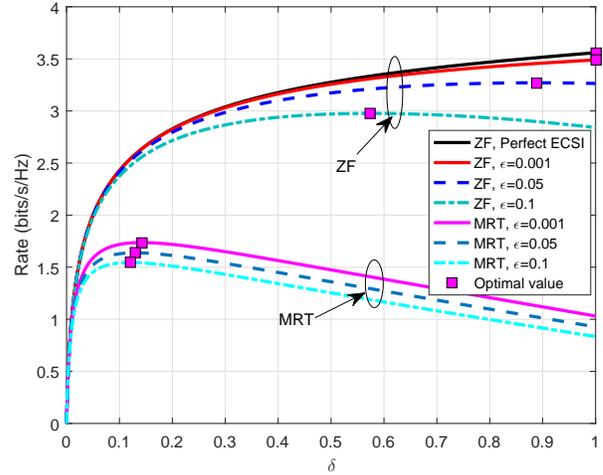}
\caption{WCSR performance versus the ratio between the actual transmit power and the harvested power at the relay, i.e., $\delta=P_r/P_H$, where $P_H$ is the harvested power at the relay, $P_H=\frac{E_H}{T/2}$.}
\label{fig_delta}
\end{figure}
As analyzed in Section III, due to the presence of the eavesdropper node, 
using the harvested energy fully for information transmission 
has both constructive and destructive effects on the system performance. 
The increase of transmit power at the relay 
helps information retransmission, 
and consequently improves the received SINR at the destination. 
Meanwhile, increasing the transmit power at relay also enhances the information strength at eavesdropper, resulting in more information leakage. 
Hence, there exists an optimal power allocation strategy at relay maximizing the WCSR performance. 
To illustrate the effects of power allocation strategy on secrecy performance and demonstrate the accuracy of the derived optimal power allocation ratio solution in Proposition 1, 
the WCSR performance with different power allocation ratio is simulated. 
In doing so, we focus on a single transmission block with the following randomly generated channel coefficients: 
the source-to-relay channel $\mathbf{h}_{r1}=[-0.9693 + 0.4571i,-1.4266 + 0.3548i,-1.8713 + 0.8418i, 0.7243 - 0.0702i,-0.9796 + 0.3818i]^{T}$, and 

\begin{small}
\begin{eqnarray*}
\!\mathbf{H}_{r2}\!\!=\!\!
\!\begin{pmatrix}\!
\!0.5023 \!+\! 0.9428i\!\! & \!\!1.0247 \!-\! 0.7866i\!\! & \!\!-0.2742 \!+\! 0.7717i\\
\!0.0555 \!-\! 0.5340i\!\! & \!\!-1.5941 \!-\! 0.4515i\!\! & \!\!-0.1950 \!-\! 0.1796i\\
\!-0.7962 \!+\! 0.9197i\!\! & \!\!-1.0882 \!-\! 0.2271i\!\! & \!\! 0.3783 \!-\! 0.5202i\\
\!1.0129 \!-\! 1.0110i\!\! & \!\!-1.1428 \!-\! 0.3521i\!\! & \!\!-0.0615 \!-\! 0.2866i\\
\!0.0170 \!+\! 1.3779i\!\! & \!\!-1.2486 \!-\! 0.0870i\!\! & \!\! 0.1937 \!-\! 1.1660i\\
\end{pmatrix}\!.\!
\end{eqnarray*}
\end{small}
It is also assumed that the LI channel 
$\mathbf{f}=[-0.6791 + 0.0424i,0.5303 + 0.1144i,-0.5517 - 0.0069i]^{T}$, while the relay-to-destination channel
$\mathbf{h}_d =[-0.5039 + 0.3520i,0.4230 - 1.1293i,0.6480 + 1.6376i]^{T}$, and the relay-to-eavesdropper channel 
$ \bar{\mathbf{h}}_e =[0.6417 - 0.3991i,0.1765 - 0.9396i,-0.5278 - 0.8778i]^{T}$. All the above channel coefficients are randomly generalized following their statistical properties.
With different levels of bounded eavesdropper channel uncertainty $\epsilon=[0.001,0.05,0.1]$, 
the WCSR performance versus the ratio between the actual transmit power and the harvested power at the relay is included in \figref{fig_delta}.
It is shown that the simulation results are highly coherent with the theoretical solutions obtained in Proposition 1, and there exists an optimal power allocation ratio maximizing WCSR. 
Moreover, results also verify that compared with MRT beamforming design method, i.e., $\mathbf{w}_{t}^{\star}=\frac{\mathbf{h}_{d}}{\|\mathbf{h}_{d}\|}$, the proposed robust ZF beamforming method can considerably reduce the effects caused by the channel uncertainty on the system secrecy, and offers higher WCSR gain.

\subsection{Key Parameters}

%***********************  Fig 6  ************************
\begin{figure}[!t]
\centering
\includegraphics[width=0.5\textwidth]{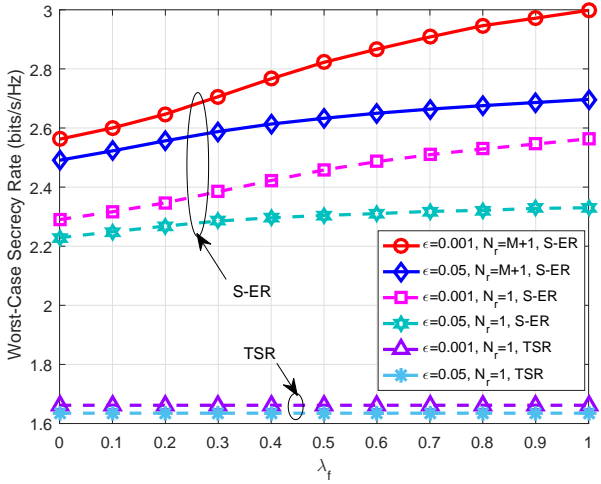}
\caption{WCSR performance versus the strength of LI for the proposed S-ER based secure system and traditional TSR secure system. The eavesdropper channel uncertainty region is set to be $\epsilon=[0.001,0.05]$.}
\label{fig_LI}
\end{figure}
In a S-ER based wireless-powered secure relay system, the strength of LI channel determines how much amount of interference energy the relay can recycle and reuse for information retransmission. 
Therefore, it is a key parameter determining the superiority of the secrecy performance. 
\figref{fig_LI} shows the WCSR performance of both S-ER based secure system and TSR system when the strength of LI channel is different. 
Contrary to traditional FD relaying systems, the LI does not interfere information transmission, but brings another energy resource at relay to support information relaying. 
Therefore, as shown in \figref{fig_LI}, increasing $\lambda_f$ leads to an increase of WCSR performance of the proposed S-ER based secure systems.  
While, the secrecy rate remains a constant because there is no interference in TSR systems. 
Since a novel energy resource (LI energy) is introduced by the FD operation at the relay in S-ER based system, its secrecy performance is significantly boosted compared with TSR systems. 
Even setting $\lambda_f=0$, in S-ER based systems, more WCSR gain can be obtained through simultaneous information transmission and energy harvesting strategy than in TSR systems. 
Specifically, with fixed eavesdropper channel uncertainty $\epsilon=0.001$, the proposed S-ER based secure system achieves more than $38\%-80\%$ WCSR gain compared with TSR system for both $N_r=M+1$ and $N_r=1$. 
With more higher eavesdropper channel uncertainty $\epsilon=0.05$, the proposed S-ER based secure system can also offer $36\%-65\%$ rate gain. 
Moreover, since LI brings extra energy resource at relay instead of harmful interference for information transmission, the relay does not have to spend extra energy or other communication resources to suppress it. This makes it possible for S-ER based secure systems to benefit from significant secrecy performance improvement without no extra resource consumptions.

%**************************   Fig 7  *************************
\begin{figure}[!t]
\centering
\includegraphics[width=0.5\textwidth]{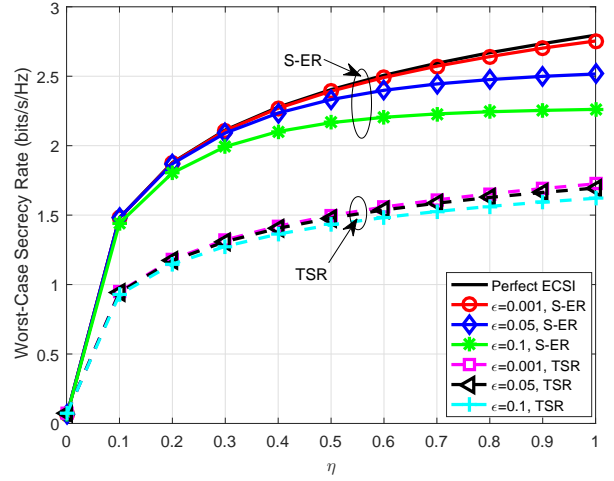}
\caption{WCSR performance versus energy harvesting efficiency for both S-ER based secure systems and TSR secure systems. The strength of LI channel is set to be $\lambda_f=0.2$.}
\label{fig_eta}
\end{figure}
As illustrated in \figref{fig_LI}, the strength of LI channel and the energy conversion efficiency factor $\eta$ determine how much energy  the relay can actually harvest from energy signals emitted by the source node and interference signals caused by LI channel.
Higher energy conversion efficiency allows wireless-powered relay to harvest more energy from energy signals, which in turn enhances the received information rate at the destination, finally improves the WCSR performance substantially, as shown in \figref{fig_eta}. 
Nevertheless, as the harvested energy increased along with $\eta$, it also leads to more serious information leakage, eventually results in the secrecy performance increasing slowly in high $\eta$ regime. 
Additionally, simulation results also show that the WCSR gap between the proposed S-ER based systems and TSR %TS-based relaying 
systems is more than $0.7$~bits/s/Hz in high $\eta$ regime, which is about 40\% rate gain. 
It indicates that with the introduced LI as extra energy resource for the wireless-powered relay node,  
the S-ER based system achieves much better WCSR performance than TSR systems.

%***********************   Fig 8  ***********************
% figure 8
\begin{figure}[!t]
\centering
\includegraphics[width=0.5\textwidth]{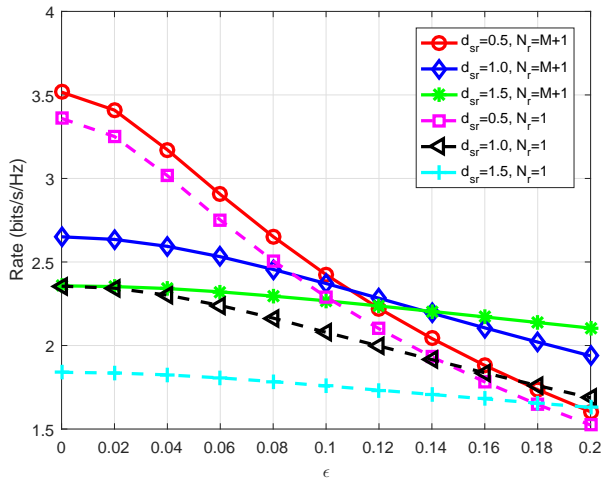}
\caption{WCSR performance for different source-relay distance $d_{sr}=[0.5,1,1.5]$. Without loss of generality, the strength of LI channel is set to be $\lambda_f=0.2$, and the energy harvesting efficiency is set to be $\eta=0.8$. }\label{fig_dis}
\end{figure}
The location of relay can also impact on the received information rate at receivers (including both destination and eavesdropper nodes). Therefore, the relay layout is of great importance in the cooperative wireless-powered systems \cite{cai2014,Jing2017TVT}. 
In this part, the WCSR performance for different relay locations is also simulated and investigated, and results are included in \figref{fig_dis}.  
Results show that with the eavesdropper channel uncertainty, the wireless-powered relay located near to the source or the destination offers optimal secrecy performance. 
The reason is that the relay node is solely powered by the source. 
In low $\epsilon$ regime, when the relay is located near to the source, it can harvest more energy for information relaying and hence improves WCSR performance. 
While in high $\epsilon$ regime, excessive amount of energy harvested at relay will lead to serious information leakage, and results in high secrecy performance loss. 
Therefore, in this case, the relay located near the source does not bring more performance gain but serious performance loss. 
On the contrary, the wireless-powered relay approaching to the destination can obtain better WCSR performance. Since 
when the relay node located near the destination, the required relay power to achieve the information transmission from relay to destination is also reduced, the destination can still achieve better performance.  

%************************ section V ********************
\section{Conclusion}
This work investigated robust secure beamforming design and relay power allocation of a wireless-powered relay system with imperfect ECSI. % eavesdropper channel state information (ECSI). 
For efficient energy transfer and information relaying, a two-phase FD relaying protocol was investigated, 
in which the S-ER enabled relay recycles LI energy and utilizes the gathered energy for information relaying.  
Adopting the worst-case deterministic ECSI model, the WCSR optimization problem was formulated by jointly designing source and relay beamformers coupled with power allocation ratio. 
As the formulated problem is non-convex, an alternative method has been proposed to solve it more efficiently. 
Through converting the original optimization problem into three subproblems and solving them iteratively, 
the closed form solutions of the optimal source information beamformer, receive beamformer and transmit beamformer of the relay as well as optimal power allocation ratio were derived theoretically. 
Additionally, for comparison, the WCSR performance of TSR system has been included and analyzed. Intensive simulations have been done, and results have validated the effectiveness of the proposed iterative method and the accuracy of the derived variables.
Results also have revealed that the proposed S-ER based secure system outperforms TSR system significantly.
Moreover, it has been demonstrated that since the proposed antenna reuse scheme can capture more information, it boosts the WCSR performance substantially when compared with schemes  exploiting only one receive antenna.

%************************ Appendix *********************
\appendices
%******************proof the prosition 1****************

\section{Proof of Proposition 1}
According to the definition of WCSR in (\ref{eq_7}),
$R_{wc}>0$ is equivalent to  $\gamma_{d}>\gamma_{ewc}$, i.e., the received SINR at the destination is larger than that at the eavesdropper. 

Based on the expressions of $\gamma_{d}$ and $\gamma_{ewc}$ in equation (\ref{eq_8}) and (\ref{eq_12}), it is easy to know that both of them can be formulated in the form of $g(y)=c_{1}y/(c_2y+c_3)$, where $y$ denotes the channel condition, $|\mathbf{h}_{d}^{\dag}\mathbf{w}_{t}|^2$ or $(|\bar{\mathbf{h}}_{e}^{\dag}\mathbf{w}_{t}|+\epsilon\|\mathbf{w}_{t}\|)^2$.
It can be proved that $g(y)$ is a monotonically increasing function of $y$.
Consequently, if and only if $|\mathbf{h}_{d}^{\dag}\mathbf{w}_{t}|^2$ is larger than $(|\bar{\mathbf{h}}_{e}^{\dag}\mathbf{w}_{t}|+\epsilon\|\mathbf{w}_{t}\|)^2$, the received SINR at the destination is larger than the received SINR at the eavesdropper. The positive secrecy rate is achieved and the security of information transmission is guaranteed.
 Otherwise,  the secrecy rate will be non-positive.

%******************proof the prosition 2 ********************

\section{Proof of Proposition 2}
The secrecy rate is the function of power allocation ratio $\delta$. For analysis simplicity, introducing $x, 0\leq x \leq 1$, to substitute $\delta$, the rate can be formulated as a function $f(x)$ 
\begin{equation}
f(x)=\frac{1}{2}\left(\log_2(1+\frac{a_1x}{a_2x+1})-\log_2(1+\frac{a_3x}{a_4x+1})\right),
\end{equation}
where parameters $a_1$, $a_2$, $a_3$ and $a_4$ are expressed as
\begin{equation*}
a_1=\frac{\eta P_s^2|\mathbf{h}_{d}^{\dag}\mathbf{w}_{t}|^2|\mathbf{w}_{r}^{\dag}\mathbf{H}_{r}^{\dag}\mathbf{w}_{s}|^2|\mathbf{h}_{r1}^{\dag}\mathbf{w}_{H}|^2}{N_0P_s|\mathbf{w}_{r}^{\dag}\mathbf{H}_{r}^{\dag}\mathbf{w}_{s}^2+N_0^2}, \\
\end{equation*}
\begin{equation*}
a_2=\frac{\eta P_sN_0|\mathbf{h}_{d}^{\dag}\mathbf{w}_{t}|^2|\mathbf{h}_{r1}^{\dag}\mathbf{w}_{H}|^2}{N_0P_s|\mathbf{w}_{r}^{\dag}\mathbf{H}_{r}^{\dag}\mathbf{w}_{s}|^2+N_0^2}-\eta|\mathbf{f}^{\dag}\mathbf{w}_{t}|^2, \\
\end{equation*}
\begin{equation*}
a_3=\frac{\eta P_s^2(|\bar{\mathbf{h}}_{e}^{\dag}\mathbf{w}_{t}|+\epsilon\|\mathbf{w}_{t}\|)^2|\mathbf{w}_{r}^{\dag}\mathbf{H}_{r}^{\dag}\mathbf{w}_{s}|^2|\mathbf{h}_{r1}^{\dag}\mathbf{w}_{H}|^2}{N_0P_s|\mathbf{w}_{r}^{\dag}\mathbf{H}_{r}^{\dag}\mathbf{w}_{s}|^2+N_0^2}, \\
\end{equation*}
\begin{equation*}
a_4=\frac{\eta P_sN_0(|\bar{\mathbf{h}}_{e}^{\dag}\mathbf{w}_{t}|+\epsilon\|\mathbf{w}_{t}\|)^2|\mathbf{h}_{r1}^{\dag}\mathbf{w}_{H}|^2}{N_0P_s|\mathbf{w}_{r}^{\dag}\mathbf{H}_{r}^{\dag}\mathbf{w}_{s}|^2+N_0^2}-\eta|\mathbf{f}^{\dag}\mathbf{w}_{t}|^2. \\
\end{equation*}
It is obvious that $a_1>0$, $a_3>0$. Moreover, under the positive secrecy rate constraint proved in Proposition \ref{prop_1}, we have $a_1>a_3$, $a_2>a_4$.

The derivative of $f(x)$ is expressed as
\begin{eqnarray*}
\begin{small}
\begin{split}
\frac{df(x)}{dx}=&\frac{1}{\ln2} \frac{\left((a_1a_4^2-a_2^2a_3)+a_1a_3(a_4-a_2)\right)x^2}{(1\!+\!a_1x\!+\!a_2x)(1\!+\!a_2x)(1\!+\!a_3x\!+\!a_4x)(1\!+\!a_4x)}\\
&\!+\!\frac{1}{\ln2}\frac{2(a_1a_4\!-\!a_2a_3)x+a_1-a_3}{(1\!+\!a_1x\!+\!a_2x)(1\!+\!a_2x)(1\!+\!a_3x\!+\!a_4x)(1\!+\!a_4x)}.
\end{split}
\end{small}
\end{eqnarray*}
As the denominator of $df(x)/dx$ is always positive for $x\geq 0$, the sign of the numerator polynomial of $df(x)/dx$
\begin{eqnarray}
\begin{split}
h(x)=&\left((a_1a_4^2-a_2^2a_3)+a_1a_3(a_4-a_2)\right)x^2\\
&+2(a_1a_4-a_2a_3)x+a_1-a_3,\\
\end{split}
\end{eqnarray}
determines the monotonicity and optimality of $f(x)$. 
Defining $A=(a_1a_4^2-a_2^2a_3)+a_1a_3(a_4-a_2)$, $B=2(a_1a_4-a_2a_3)$, $C=a_1-a_3$, we have $B<0$, $C>0$, $h(0)=C>0$ and $\Delta=4a_1a_3(a_2-a_4)(a_1+a_2-a_3-a_4)>0$. It means that there are two roots for $h(x)=0$. In the following, the sign of $h(x)$ is analyzed.

1) For $A>0$, i.e., $\eta|\mathbf{f}^{\dag}\mathbf{w}_{t}|^2>Q$. Based on Vieta theory,  it is easy to prove that there are two positive roots for $h(x)=0$, and one of them is larger than $1$. Therefore, two cases of the sign of $h(1)=\left(1-\eta|\mathbf{f}^{\dag}\mathbf{w}_{t}|^2\right)^2-Q^2$ should be analyzed.

Case i) For $h(1)\geq 0$, we have $Q<\eta|\mathbf{f}^{\dag}\mathbf{w}_{t}|^2\leq1-Q$ and $Q<\frac{1}{2}$. In this case, both two roots are no less than $x=1$. So, the sign of $h(x)$ is always positive in the interval $0\leq x\leq 1$. That is, the function $f(x)$ is monotonically increasing with respect to $0\leq x\leq 1$, and the optimal solution is $x^{\star}=1$.

Case ii) For $h(1)< 0$, we have $\eta|\mathbf{f}^{\dag}\mathbf{w}_{t}|^2>\max(Q,1-Q)$ and $0<Q<1$. In this case, one root is larger than $x=1$ and the other one is less than $x=1$, i.e., the sign of $h(x)$ is changed from positive to negative. Thus $f(x)$ is concave, and the optimal solution equals to
 \begin{equation}\label{x_opt}
 x^{\star}=\frac{1}{\eta|\mathbf{f}^{\dag}\mathbf{w}_{t}|^2+Q}
 \end{equation}
 
2) For $A\leq 0$, i.e., $\eta|\mathbf{f}^{\dag}\mathbf{w}_{t}|^2\leq Q$. Based on Vieta theory,  it is easy to prove that there are one positive root and one negative root for $h(x)=0$, and the absolute value of the negative root is larger than the positive one, thus  $h(x)$ is monotonically decreasing in the interval $0\leq x\leq 1$.

Case i) $h(1)\geq 0$, we have  $\eta|\mathbf{f}^{\dag}\mathbf{w}_{t}|^2\leq \min(Q,1-Q)$ and $0<Q<1$. %, and the positive root is no less than $x=1$, 
 Since $h(0)>0$, $h(1)\geq 0$ and $h(x)$ is monotonically decreasing, then we have $h(x)\geq 0$. Therefore, the function $f(x)$ is monotonically increasing for $0\leq x\leq 1$, and the optimal solution is $x^{\star}=1$.
 
Case ii) For $h(1)=0$, we have $\eta|\mathbf{f}^{\dag}\mathbf{w}_{t}|^2=1-Q$ and $\frac{1}{2}<Q<1$. In this case, with the condition $h(0)>0$, it is easy to prove that $h(x)\geq 0$ for $0\leq x\leq 1$, i.e., the function $f(x)$ is monotonically increasing in $0\leq x\leq 1$, and the optimal solution is $x^{\star}=1$.

 Case iii) For $h(1)< 0$, when $A<0$, we have $1-Q<\eta|\mathbf{f}^{\dag}\mathbf{w}_{t}|^2<\min(1,Q)$ and $Q>\frac{1}{2}$, and when $A=0$, we have $\eta|\mathbf{f}^{\dag}\mathbf{w}_{t}|^2=Q$ and $\frac{1}{2}<Q<1$. 
Specially, when $A<0$, the positive root is less than $x=1$,  i.e., the sign of $h(x)$ is changed from positive to negative. That is to say $f(x)$ is concave, and the optimal solution is the same as in equation (\ref{x_opt}).  
Alternatively, when $A=0$, $h(x)$ has only one positive root and less than $x=1$. Thus, similarly with $A<0$ case, $f(x)$ is also concave, and the optimal solution is (\ref{x_opt}). 
 
Consequently, combining case ii) in part 1) and case iii) in part 2), for constraints 
$C_1=\{0<Q<1,\eta|\mathbf{f}^{\dag}\mathbf{w}_{t}|^2>\max(Q,1-Q)\}$, or 
$C_2=\{Q>\frac{1}{2},1-Q<\eta|\mathbf{f}^{\dag}\mathbf{w}_{t}|^2<\min(1,Q)\}$, or
$C_3=\{\frac{1}{2}<Q<1, \eta|\mathbf{f}^{\dag}\mathbf{w}_{t}|^2=Q\}$, 
the function $f(x)$ is concave, and the optimal solution is derived as (\ref{x_opt}). Otherwise, the function $f(x)$ is monotonically increasing in $0\leq x\leq 1$, and the optimal solution is $x^{\star}=1$.

%***************** prop 3  ***********************
\section{Proof of Proposition 3}
As the source information beamformer and the relay receive beamformer always appear together with the source-to-relay channel coefficients in a form of production, the WCSR performance with respect to $|\mathbf{w}_{r}^{\dag}\mathbf{H}_{r}^{\dag}\mathbf{w}_{s}|^2$ is  investigated.  
Since the solution of the optimal power allocation ratio $\delta^{\star}$ in a piecewise expression, two cases should be analyzed.

1) For $\delta^{\star}=\frac{1}{\eta|\mathbf{f}^{\dag}\mathbf{w}_{t}|^2+Q}$, the received SINRs at the receive nodes (destination and eavesdropper) are expressed as
\begin{equation*}
\begin{small}
\gamma_{d}=\frac{P_s|\mathbf{w}_{r}^{\dag}\mathbf{H}_{r}^{\dag}\mathbf{w}_{s}|^2|\mathbf{h}_{d}^{\dag}\mathbf{w}_{t}|}{N_0|\mathbf{h}_{d}^{\dag}\mathbf{w}_{t}|\!\!+\!\!(|\bar{\mathbf{h}}_{e}^{\dag}\mathbf{w}_{t}|\!\!+\!\!\epsilon\|\mathbf{w}_{t}\|)\sqrt{N_0P_s|\mathbf{w}_{r}^{\dag}\mathbf{H}_{r}^{\dag}\mathbf{w}_{s}|^2\!+\!N_0^2}},
\end{small}
\end{equation*}
\begin{equation*}
\begin{small}
\gamma_{ewc}=\frac{P_s|\mathbf{w}_{r}^{\dag}\mathbf{H}_{r}^{\dag}\mathbf{w}_{s}|^2(|\bar{\mathbf{h}}_{e}^{\dag}\mathbf{w}_{t}|+\epsilon\|\mathbf{w}_{t}\|)}{N_0(|\bar{\mathbf{h}}_{e}^{\dag}\mathbf{w}_{t}|\!\!+\!\!\epsilon\|\mathbf{w}_{t}\|)\!\!+\!\!|\mathbf{h}_{d}^{\dag}\mathbf{w}_{t}|\sqrt{\!N_0P_s|\mathbf{w}_{r}^{\dag}\mathbf{H}_{r}^{\dag}\mathbf{w}_{s}|^2\!\!+\!\!N_0^2}}.
\end{small}
\end{equation*}
With the definition of $x=|\mathbf{w}_{r}^{\dag}\mathbf{H}_{r}^{\dag}\mathbf{w}_{s}|^2$, and $g(x)=\sqrt{P_s x+N_0}$, the WCSR can be formed as a function of $x$
\begin{equation}
f(x)=\log_2\left(\frac{a_1g(x)+ca_2}{a_2g(x)+ca_1}\right), 
%\frac{1}{2}\log_2\left(\frac{a_1g(x)+ca_2}{a_2g(x)+ca_1}\cdot\frac{a_1g(x)+ca_2}{a_2g(x)+ca_1}\right), 
%\frac{1}{2}\log_2\left(\frac{a_1g(x)^2+ca_2g(x)}{ca_2g(x)+c^2a_1}\cdot\frac{ca_1g(x)+c^2a_2}{a_2g(x)^2+ca_1g(x)}\right), 
\end{equation}
where $c=\sqrt{N_0}$, $a_1=|\mathbf{h}_{d}^{\dag}\mathbf{w}_{t}|$, $a_2=|\bar{\mathbf{h}}_{e}^{\dag}\mathbf{w}_{t}|+\epsilon\|\mathbf{w}_{t}\|$. %, and $a_1>a_2$. 
The derivative function is expressed as
\begin{equation}\label{eq_55}
\frac{df(x)}{dx}=\frac{1}{\ln2}\frac{ca_2(a_1-a_2)}{(a_1g(x)+ca_2)(a_2g(x)+ca_1)}\frac{dg(x)}{dx}.
\end{equation}

As the denominator of (\ref{eq_55}) is always positive for $x\geq0$, the sign of $df(x)/dx$ is determined by the sign of 
its numerator. It is obvious that $g(x)$ is an increasing function of $x$, then we have the sign of its derivative function is positive, $dg(x)/dx>0$. Additionally, under the positive WCSR constraint, $a_1>a_2$ always holds. 
Therefore, $df(x)/dx$ is always greater than zero, and $f(x)$ is a monotonically increasing function of $x\geq0$.

2) For $\delta^{\star}=1$, the WCSR can be formed as the following function
\begin{equation}
h(x)=\frac{1}{2}\log_2\left(\frac{b_1x}{b_2x+1}\right)-\frac{1}{2}\log_2\left(\frac{b_3x}{b_4x+1}\right). 
\end{equation}
On the basis of Lemma 1 proposed in \cite{Jing2017TVT}, it is easily proved that $h(x)$ is also an increasing function of $x$.

Consequently, for any value of power allocation ratio $\delta^{\star}$, the WCSR is a monotonically increasing function of 
$|\mathbf{w}_{r}^{\dag}\mathbf{H}_{r}^{\dag}\mathbf{w}_{s}|^2$, and only if the maximum value of $|\mathbf{w}_{r}^{\dag}\mathbf{H}_{r}^{\dag}\mathbf{w}_{s}|^2$ is achieved, the WCSR performance can be maximized.

%%%%%%%%%%%%%%%%%%%%%%  Reference   %%%%%%%%%%%%%%%%%%%%%%%

%\bibliographystyle{IEEEtran}

%\bibliography{IEEEabrv,bib}

\end{document}